\newcommand{\algmargin}{\the\ALG@thistlm}
\newlength{\whilewidth}
\algnewcommand{\parState}[1]{\State%
	\parbox[t]{\dimexpr\linewidth-\algmargin}{\strut #1\strut}}
\DeclareFontShape{T1}{calligra}{m}{n}{<->s*[2.2]callig15}{}
\DeclareMathAlphabet{\matcalligra}{T1}{calligra}{m}{n}
\DeclareMathAlphabet\mathbfcal{OMS}{cmsy}{b}{n}
\let\mathcal\undefined \DeclareMathAlphabet{\mathcal}{OMS}{cmsy}{m}{n}
\newcommand{\flqs}{\ensuremath{\mathsf{4LQS}}}
\newcommand{\flqsr}{\ensuremath{\mathsf{4LQS^R}}}
\newcommand{\dlss}{\mathcal{DL}\langle \mathsf{4LQS^R}\rangle(\D)}
\newcommand{\dlssx}{\mathcal{DL}\langle \mathsf{4LQS^{R,\!\times}}\rangle(\D)}
\newcommand{\shdlssx}{\mathcal{DL}_{\D}^{4,\!\times}}
\newcommand{\shdlss}{\mathcal{DL}_{\D}^{4}}
\newcommand{\D}{\mathbf{D}}
\newcommand{\sroiqd}{\mathcal{SROIQ}(\D)}
\newcommand{\defAs}{\coloneqq}
\newcommand{\I}{\mathbf{I}}
\newcommand{\Ind}{\mathbf{Ind}}
\newcommand{\C}{\mathbf{C}}
\newcommand{\Ra}{\mathbf{R_A}}
\newcommand{\Rd}{\mathbf{R_D}}
\newcommand{\sym}{\mathsf{Sym}}
\newcommand{\asym}{\mathsf{Asym}}
\newcommand{\refl}{\mathsf{Ref}}
\newcommand{\irref}{\mathsf{Irref}}
\newcommand{\tra}{\mathsf{Tra}}
\newcommand{\fun}{\mathsf{Fun}}
\newcommand{\ck}{\mathsf{cpt}_\mathcal{KB}}
\newcommand{\ark}{\mathsf{arl}_\mathcal{KB}}
\newcommand{\crk}{\mathsf{crl}_\mathcal{KB}}
\newcommand{\ik}{\mathsf{ind}_\mathcal{KB}}
\newcommand{\bfk}{\mathsf{bf}_\mathcal{KB}^{\D}}
\newcommand{\vipcomment}[1]{}
\newcommand{\pow}{\mathcal{P}}
\newcommand{\T}{\mathcal{T}}
\newcommand{\seq}{\mathcal{S}^{\overline{\beta}}_i}
\newcommand{\seqnj}{\mathcal{S}^{\overline{\beta}}_j}
\newcommand{\ke}{KE-tableau}
\newcommand{\M}{\mathbfcal{M}}
\newcommand{\KB}{\mathcal{KB}}
\newcommand{\Vars}{\mathtt{Vars}}
\newcommand{\vari}{\mathtt{Var}_i}
\newcommand{\varz}{\mathtt{Var}_0}
\newcommand{\varu}{\mathtt{Var}_1}
\newcommand{\vart}{\mathtt{Var}_3}
\newcommand{\sfvar}[2]{ {\mathsf{#1}_{#2}} }
\newcommand{\varar}{\mathsf{V}_{\mathsf{ar}}}
\newcommand{\varcon}{\mathsf{V}_{\mathsf{c}}}
\newcommand{\varind}{\mathsf{V}_{\mathsf{i}}}
\newcommand{\varcr}{\mathsf{V}_{\mathsf{cr}}}
\newcommand{\DT}{\mathcal{D}}
\def\@seccntformat#1{\@ifundefined{#1@cntformat}%
   {\csname the#1\endcsname\quad}  
   {\csname #1@cntformat\endcsname}
}
\let\oldappendix\appendix 
\renewcommand\appendix{%
    \oldappendix
    \newcommand{\section@cntformat}{\appendixname~\thesection\quad}
}
\renewcommand\subsubsection{\@startsection{subsubsection}{3}{\z@}%
                       {-18\p@ \@plus -4\p@ \@minus -4\p@}%
                       {0.5em \@plus 0.22em \@minus 0.1em}%
                       {\normalfont\normalsize\bfseries\boldmath}}
\title{A set-theoretic approach to ABox reasoning services (Extended Version)} 
\author{Domenico Cantone \and Marianna Nicolosi-Asmundo \and \\Daniele Francesco Santamaria}
\institute{
	University of Catania, Dept. of Mathematics and Computer Science\\
	~email:~\texttt{\{cantone,nicolosi,santamaria\}@dmi.unict.it}
}
\begin{document}
	\maketitle
	
	
	\begin{abstract}
	In this paper we consider the most common ABox reasoning services for the description logic  $\dlssx$ ($\shdlssx$, for short) and prove their decidability via a reduction to the satisfiability problem for the set-theoretic fragment \flqsr. The description logic $\shdlssx$ is very expressive, as it admits various concept and role constructs, and data types, 
	that allow one to represent rule-based languages such as SWRL. 
	
	Decidability results are achieved by defining a generalization of the conjunctive query answering problem, called HOCQA (Higher Order Conjunctive Query Answering), that can be instantiated to the most wide\-spread ABox reasoning tasks. We also present a \ke\space based procedure for calculating the answer set from $\shdlssx$ knowledge bases and higher order $\shdlssx$ conjunctive queries, thus providing means for reasoning on several well-known ABox reasoning tasks.
	Our calculus extends a previously introduced \ke\space based decision procedure for the CQA problem.
	\end{abstract}


\section{Introduction}
Recently, results from Computable Set Theory have been applied to knowledge representation for the semantic web in order to define and reason about description logics and rule languages. Such a study is motivated by the fact that Computable Set Theory is a research field plenty of interesting decidability results and that there exists a natural translation function between some set theoretical fragments and description logics and rule languages.
 
%
In particular, the decidable four-level stratified fragment of set theory $\flqsr$, involving variables of four sorts, pair terms, and a restricted form of quantification over variables of the first three sorts (cf.\ \cite{CanNic2013}), has been used in \cite{CanLonNicSanRR2015} to represent the description logic $\dlss$ (more simply referred to as $\shdlss$). The logic $\shdlss$ admits concept constructs such as full negation, union and intersection of concepts, concept domain and range, existential quantification and min cardinality on the left-hand side of inclusion axioms. It also supports role constructs such as role chains on the left hand side of inclusion axioms, union, intersection, and complement of abstract roles, and properties on roles such as transitivity, symmetry, reflexivity, and irreflexivity. As briefly shown in \cite{CanLonNicSanRR2015}, $\shdlss$ is particularly suitable to express a rule language such as the Semantic Web Rule Language (SWRL), an extension of the Ontology Web Language (OWL). It admits
data types, a simple form of concrete domains that are relevant in real world applications.  
In \cite{CanLonNicSanRR2015}, the consistency problem for $\shdlss$-knowledge bases has been proved decidable by means of a reduction to the satisfiability problem for $\flqsr$, whose decidability has been established in \cite{CanNic2013}. It has also been shown that, under not very restrictive constraints, the consistency problem for $\shdlss$-knowledge bases is \textbf{NP}-complete. Such a low complexity result is motivated by the fact that existential quantification cannot appear on the right-hand side of inclusion axioms. Nonetheless, $\shdlss$ turns out to be  more expressive than other low complexity logics such as OWL RL and suitable for representing real world ontologies. 
For example the restricted version of $\shdlss$ allows one to express  several ontologies, such as \textsf{Ontoceramic}  \cite{cilc15} classifying ancient pottery. 


In \cite{ictcs16}, the description logic $\dlssx$ ($\shdlssx$, for short), extending $\shdlss$ with Boolean operations on concrete roles and with the product of concepts, has been introduced and the \emph{Conjunctive Query Answering} (CQA) problem for $\shdlssx$ has been proved decidable via a reduction to the CQA problem for $\flqsr$, whose decidability follows from that of $\flqsr$ (see \cite{CanNic2013}). CQA is a powerful way to query ABoxes, particularly relevant in the context of description logics and for real world applications based on semantic web technologies, as it provides mechanisms for interacting with ontologies and data. The CQA problem for description logics has been introduced in \cite{calvanese98,calvanese08} and studied for several well-known description logics (cf.\ \cite{Calvanese:1998:DQC:275487.275504, Calvanese2013335, calvanese2007answering, Ortiz:Calvanese:et-al:06a, GlHS07a,
GliHoLuSa-JAIR08,HorrSattTob-CADE-2000,j.websem63, HorrTess-aaai-2000,DBLP:conf/ijcai/HustadtMS05, Rosa07c, DBLP:conf/cade/Lutz08, Ortiz:2011:QAH:2283516.2283571}). Finally, we mention also a terminating \ke\space based procedure that, given a $\shdlssx$-query $Q$ and a $\shdlssx$-knowledge base $\mathcal{KB}$ represented in set-theoretic terms, determines the answer set of $Q$ with respect to $\mathcal{KB}$. \ke\space systems \cite{dagostino1999} allow the construction of trees whose distinct branches define mutually exclusive situations, thus preventing the proliferation of redundant branches, typical of semantic tableaux.  

In this paper we extend the results presented in \cite{ictcs16} by considering also the main ABox reasoning tasks for $\shdlssx$, such as instance checking and concept retrieval, and study their decidability via a reduction to the satisfiability problem for  $\flqsr$. Specifically, we define Higher Order (HO) $\shdlssx$-conjunctive queries  admitting variables of three sorts: individual and data type values variables, concept variables, and role variables. HO $\shdlssx$-conjunctive queries can be instantiated to any of the ABox reasoning tasks we are considering in the paper. Then, we define the Higher Order Conjunctive Query Answering (HOCQA) problem for $\shdlssx$ and prove its decidability by reducing it to the HOCQA problem for $\flqsr$. Decidability of the latter problem follows from that of the satisfiability problem for $\flqsr$. $\flqsr$ representation of $\shdlssx$ knowledge bases is defined according to \cite{ictcs16}. $\flqsr$ turns out to be naturally suited for the HOCQA problem since
HO $\shdlssx$-conjunctive queries are easily translated into $\flqsr$-formulae. In particular, individual and data type value variables are mapped into $\flqsr$ variables of sort 0, concept variables into $\flqsr$ variables of sort 1, and role variables into $\flqsr$ variables of sort 3. Finally, we present an extension of the \ke\space presented in \cite{ictcs16}, which provides a decision procedure for the HOCQA task for $\shdlssx$.


\section{Preliminaries}
\subsection{The set-theoretic fragment \flqsr} \label{4LQS}

It is convenient to first introduce the syntax and semantics of a more general four-level quantified language, denoted $\flqs$.
Then 
we provide some restrictions on the quantified formulae of $\flqs$ to characterize \flqsr. The interested reader can find more details in \cite{CanNic2013} together with the decision procedure for the satisfiability problem for \flqsr.
%
%

$\flqs$ involves four collections, $\mathcal{V}_i$, of variables of sort $i=0,1,2,3$, respectively. These will be denoted by $X^i,Y^i,Z^i,\ldots$ (in particular, variables of sort $0$ will also be denoted by $x, y, z, \ldots$).
In addition to variables, $\flqs$ involves also \emph{pair terms} of the form $\langle x,y \rangle$, for $ x,y \in \mathcal{V}_0$.

\smallskip

\noindent\emph{$\flqs$-quantifier-free atomic formulae} are classified as:
\begin{itemize}[topsep=0.1cm, itemsep=0.1cm]
\item[-] level 0:~~ $x=y$,~~ $x \in X^1$,~~ $\langle x,y \rangle = X^2$,~~ $\langle x,y \rangle \in X^3$;
\item[-] level 1:~~ $X^1=Y^1$,~~ $X^1 \in X^2$;
\item[-] level 2:~~ $X^2=Y^2$,~~ $X^2 \in X^3$.
\end{itemize}

\smallskip

\noindent $\flqs$-\emph{purely universal formulae} are classified as:
\begin{itemize}[topsep=0.1cm, itemsep=0.1cm]
\item[-] { level 1: $(\forall z_1)\ldots(\forall z_n) \varphi _0$, where $z_1,\ldots,z_n$  $\in \mathcal{V}_0$ and $\varphi _0$ is any propositional combination of quantifier-free atomic formulae of level 0;}
\item[-] { level 2: $(\forall Z^1_1)\ldots(\forall Z^1_m) \varphi _1$, where $Z^1_1,\ldots,Z^1_m $  $\in \mathcal{V}_1$ and $\varphi _1$ is any propositional combination of quantifier-free atomic formulae of levels 0 and 1, and of purely universal formulae of level 1;}
\item[-] {level 3: $(\forall Z^2_1)\ldots(\forall Z^2_p) \varphi _2$, where $Z^2_1,\ldots,Z^2_p $  $\in \mathcal{V}_2$ and $\varphi _2$ is any propositional combination of quantifier-free atomic formulae and of purely universal formulae of levels 1 and 2.}
\end{itemize}

\noindent
$\flqs$-formulae are all the propositional combinations of quantifier-free atomic formulae of levels 0, 1, 2, and of purely universal formulae of levels 1, 2, 3.

\medskip


The variables $z_1,\ldots,z_n$ are said to occur \textit{quantified} in $(\forall z_1) \ldots (\forall z_n) \varphi_0$. Likewise, $Z^1_1,\ldots, Z^1_m$ and $Z^2_1, \ldots, Z^2_p$ occur quantified in $(\forall Z^1_1) \ldots (\forall Z^1_m) \varphi_1$ and in $(\forall Z^2_1) \ldots (\forall Z^2_p)  \varphi_2$, respectively.
A variable occurs \textit{free} in a $\flqs$-formula $\varphi$
if it does not occur quantified in any subformula of $\varphi$. For $i = 0,1,2,3$, we denote with $\vari(\varphi)$ the collections of variables of level $i$ occurring free in $\varphi$ and we put  $\Vars(\varphi) \defAs \bigcup_{i=0}^{3} \vari(\varphi)$.

A 
substitution $\sigma \defAs \{ \vec{x}/\vec{y}, \vec{X}^1/\vec{Y}^1, \vec{X}^2/\vec{Y}^2, \vec{X}^3/\vec{Y}^3\}$ is the mapping $\varphi \mapsto \varphi\sigma$ such that, for any given $\flqs$-formula $\varphi$, $\varphi\sigma$ is the $\flqs$-formula obtained from $\varphi$ by replacing the free occurrences of the variables $x_i$ in $\vec{x}$ (for $i = 1,\ldots, n$)  
with the corresponding $y_i$ in $\vec{y}$, of $X^1_j$ in $\vec{X}^1$ (for $j = 1,\ldots,m$) with $Y^1_j$ in $\vec{Y}^1$, of $X^2_k$ in $\vec{X}^2$ (for $k = 1,\ldots,p$) with $Y^2_k$ in $\vec{Y}^2$, and of $X^3_h$ in $\vec{X}^3$ (for $h= 1,\ldots,q$) with $Y^3_h$ in $\vec{Y}^3$, respectively.  A substitution $\sigma$ is \emph{free} for $\varphi$ if the formulae $\varphi$ and $\varphi\sigma$ have exactly the same occurrences of quantified variables. The \emph{empty substitution}, denoted with $\epsilon$, satisfies $\varphi \epsilon = \varphi$, for every $\flqs$-formula $\varphi$.



A $\flqs$-\emph{interpretation} is a pair $\mathbfcal{M}=(D,M)$, where $D$ is a non-empty collection of objects (called \emph{domain} or \emph{universe} of $\mathbfcal{M}$) and $M$ is an assignment over the variables in $\mathcal{V}_i$, for $i=0,1,2,3$,  such that:\\[.1cm]
\centerline{$MX^{0} \in D, ~~~  MX^1 \in \pow(D), ~~~ MX^2 \in \pow(\pow(D)), ~~~ MX^3 \in \pow(\pow(\pow(D))),$}\\[.1cm]
where $ X^{i} \in \mathcal{V}_i$, for $i=0,1,2,3$, and $\pow(s)$ denotes the powerset of $s$.
%
%
%
%
%

\smallskip
\noindent
Pair terms are interpreted \emph{\`a la} Kuratowski, and therefore we put \\[.1cm]
\centerline{$M \langle x,y \rangle \defAs \{ \{ Mx \},\{ Mx,My \} \}$.}\\[0.2cm]
%
%
%
Quantifier-free atomic formulae and purely universal formulae are evaluated in a standard way according to the usual meaning of the predicates `$\in$'
and `$=$'. The interpretation of quantifier-free atomic formulae and of purely universal formulae is given in \cite{CanNic2013}.

%
%

Finally, compound formulae are interpreted according to the standard rules of propositional logic. If $\mathbfcal{M} \models \varphi$, then $\mathbfcal{M} $ is said to be a $\flqs$-model for $\varphi$. A $\flqs$-formula is said to be \emph{satisfiable} if it has a $\flqs$-model. A $\flqs$-formula is \emph{valid} if it is satisfied by all $\flqs$-interpretations. 

We are now ready to present the fragment \flqsr\ of $\flqs$ of our interest. This is the collection of the formulae $\psi$ of $\flqs$ fulfilling the restrictions:
\begin{enumerate}[label=\arabic*., topsep=0.1cm, itemsep=0.1cm]
\item for every purely universal formula $(\forall Z^1_1)\ldots(\forall Z^1_m) \varphi_1$ of level 2 occurring in $\psi$ and every purely universal formula $(\forall z_1)\ldots(\forall z_n) \varphi_0$ of level 1 occurring negatively in $\varphi_1$, $\varphi_0$ is a propositional combination of quantifier-free atomic formulae of level $0$ and the condition\\[.1cm]
\centerline{$\neg \varphi_0 \rightarrow \overset{n}{ \underset {i=1} \bigwedge} \; \overset {m} { \underset {j=1 }\bigwedge} z_i \in Z^1_j$}\\[.1cm]
%
is a valid $\flqs$-formula (in this case we say that $(\forall z_1)\ldots(\forall z_n) \varphi_0$ is \emph{linked to the variables} $Z^1_1,\ldots,Z^1_m$);

\item for every purely universal formula  $(\forall Z^2_1)\ldots(\forall Z^2_p) \varphi_2$  of level 3 in $\psi$:
\begin{itemize}[topsep=0.1cm, itemsep=0.cm]
\item[-] every purely universal formula of level 1 occurring negatively in $\varphi_2$ and not occurring in a purely universal formula of level 2 is only allowed to be of the form\\[.1cm]
\centerline{$(\forall z_1)\ldots(\forall z_n) \neg( \overset {n}{ \underset {i=1} \bigwedge} \; \overset {n} { \underset {j=1}\bigwedge} \langle z_i,z_j \rangle=Y^2_{ij}),$}\\[.1cm]
with $Y^2_{ij} \in \mathcal{V}^2$, for $i,j=1,\ldots,n$;

\item[-] purely universal formulae $(\forall Z^1_1)\ldots(\forall Z^1_m) \varphi_1$ of level 2 may occur only positively in $\varphi_2$.\footnote{Definitions of positive occurrence and of negative occurrence of a formula inside another formula can be found in \cite{CanNic2013}.}
\end{itemize}

\end{enumerate}

Restriction 1 has been introduced for technical reasons concerning the decidability of the satisfiability problem for the fragment, while 
restriction  2 allows one to define binary relations and several operations on them. 

The semantics of \flqsr\space plainly coincides with that of $\flqs$.

\subsection{The logic $\dlssx$}\label{dlssx}
The description logic $\dlssx$ (which, as already remarked, will be more simply referred to as $\shdlssx$) is an extension of the  logic $\dlss$ presented in \cite{CanLonNicSanRR2015}, where Boolean operations on concrete roles and the product of concepts are defined. 
%
In addition to other features, $\shdlssx$ admits also data types, a simple form of concrete domains that are relevant in real-world applications. In particular, it treats derived data types by admitting data type terms constructed from data ranges by means of a finite number of applications of the Boolean operators. Basic and derived data types can be used inside inclusion axioms involving concrete roles.

Data types are introduced through the notion of data type map, defined  according to \cite{Motik2008} as follows. Let $\D = (N_{D}, N_{C},N_{F},\cdot^{\D})$ be a \emph{data type map}, where  $N_{D}$ is a finite set of data types, $N_{C}$ is a function assigning a set of constants $N_{C}(d)$ to each data type $d \in N_{D}$, $N_{F}$ is a function assigning a set of facets $N_{F}(d)$ to each $d \in N_{D}$, and $\cdot^{\D}$ is a function assigning a data type interpretation $d^{\D}$ to each data type $d \in N_{D}$, a facet interpretation $f^{\D} \subseteq d^{\D}$ to each facet $f \in N_{F}(d)$, and a data value $e_{d}^{\D} \in d^{\D}$ to every constant $e_{d} \in N_{C}(d)$.  We shall assume that the interpretations of the data types in $N_{D}$ are nonempty pairwise disjoint sets.

%
%
%
%

Let $\Ra$, $\Rd$, $\mathbf{C}$, $\mathbf{I}$ be denumerable pairwise disjoint sets of abstract role names, concrete role names, concept names, and individual names, respectively. We assume that the set of abstract role names $\Ra$ contains a name $U$ denoting the universal role. 
%

 \vipcomment{An abstract role hierarchy $\mathsf{R}_{a}^{H}$ is a finite collection of RIAs.  A strict partial order $\prec$ on  $\Ra \cup \{ R^- \mid R \in \Ra \}$ is called \emph{a regular order} if $\prec$ satisfies, additionally, $S \prec R$ iff $S^- \prec R$, for all roles R and S.\footnote{We recall that a strict partial order $\prec$  on a set $A$ is an irreflexive and transitive relation on $A$.}}

\noindent
(a) $\shdlssx$-data type, (b) $\shdlssx$-concept, (c) $\shdlssx$-abstract role, and (d) $\shdlssx$-concrete role terms are constructed according to the following syntax rules:
\begin{itemize}
\item[(a)] $t_1, t_2 \longrightarrow dr ~|~\neg t_1 ~|~t_1 \sqcap t_2 ~|~t_1 \sqcup t_2 ~|~\{e_{d}\}\, ,$

\item[(b)] $C_1, C_ 2 \longrightarrow A ~|~\top ~|~\bot ~|~\neg C_1 ~|~C_1 \sqcup C_2 ~|~C_1 \sqcap C_2 ~|~\{a\} ~|~\exists R.\mathit{Self}| \exists R.\{a\}| \exists P.\{e_{d}\}\, ,$

\item[(c)] $R_1, R_2 \longrightarrow S ~|~U ~|~R_1^{-} ~|~ \neg R_1 ~|~R_1 \sqcup R_2 ~|~R_1 \sqcap R_2 ~|~R_{C_1 |} ~|~R_{|C_1} ~|~R_{C_1 ~|~C_2} ~|~id(C) ~|~ $

$C_1 \times C_2    \, ,$

\item[(d)] $P_1,P_2 \longrightarrow T ~|~\neg P_1 ~|~ P_1 \sqcup P_2 ~|~ P_1 \sqcap P_2  ~|~P_{C_1 |} ~|~P_{|t_1} ~|~P_{C_1 | t_1}\, ,$
\end{itemize}
where $dr$ is a data range for $\D$, $t_1,t_2$ are data type terms, $e_{d}$ is a constant in $N_{C}(d)$, $a$ is an individual name, $A$ is a concept name, $C_1, C_2$ are $\shdlssx$-concept terms, $S$ is an abstract role name,  $R, R_1,R_2$ are $\shdlssx$-abstract role terms, $T$ is a concrete role name, and $P,P_1,P_2$ are $\shdlssx$-concrete role terms. We remark that data type terms are introduced in order to represent derived data types.

A $\shdlssx$-knowledge base is a triple ${\mathcal K} = (\mathcal{R}, \mathcal{T}, \mathcal{A})$ such that $\mathcal{R}$ is a $\shdlssx$-$RBox$, $\mathcal{T}$ is a $\shdlssx$-$TBox$, and $\mathcal{A}$ a $\shdlssx$-$ABox$. 

A $\shdlssx$-$RBox$ is a collection of statements of the following forms:\\[0.1cm]

 \centerline {$R_1 \equiv R_2$, $R_1 \sqsubseteq R_2$, $R_1\ldots R_n \sqsubseteq R_{n+1}$, $\sym(R_1)$, $\asym(R_1)$, $\refl(R_1)$,} 
 \centerline{$\irref(R_1)$, $\mathsf{Dis}(R_1,R_2)$,
$\tra(R_1)$, $\fun(R_1)$, $R_1 \equiv C_1 \times C_2$, $P_1 \equiv P_2$,} 
\centerline{$P_1 \sqsubseteq P_2$, $\mathsf{Dis}(P_1,P_2)$, $\fun(P_1)$,} 
 
  $ $\\[0.1cm] where $R_1,R_2$ are $\shdlssx$-abstract role terms, $C_1, C_2$ are $\shdlssx$-abstract concept terms, and $P_1,P_2$ are $\shdlssx$-concrete role terms. Any expression of the type $w \sqsubseteq R$, where $w$ is a finite string of $\shdlssx$-abstract role terms and $R$ is an $\shdlssx$-abstract role term, is called a \emph{role inclusion axiom (RIA)}. 

A $\shdlssx$-$TBox$ is a set of statements of the types:
\begin{itemize}
\item[-] $C_1 \equiv C_2$, $C_1 \sqsubseteq C_2$, $C_1 \sqsubseteq \forall R_1.C_2$, $\exists R_1.C_1 \sqsubseteq C_2$, $\geq_n\!\! R_1. C_1 \sqsubseteq C_2$, \\$C_1 \sqsubseteq {\leq_n\!\! R_1. C_2}$,
\item[-] $t_1 \equiv t_2$, $t_1 \sqsubseteq t_2$, $C_1 \sqsubseteq \forall P_1.t_1$, $\exists P_1.t_1 \sqsubseteq C_1$, $\geq_n\!\! P_1. t_1 \sqsubseteq C_1$, $C_1 \sqsubseteq {\leq_n\!\! P_1. t_1}$,
\end{itemize}
where $C_1,C_2$ are $\shdlssx$-concept terms, $t_1,t_2$ data type terms, $R_1$  a $\shdlssx$-abstract role term, $P_1$ a $\shdlssx$-concrete role term. Any statement of the form $C \sqsubseteq D$, with  $C$, $D$ $\shdlss$-concept terms, is a 
\emph{general concept inclusion axiom}.

A $\shdlssx$-$ABox$ is a set of \emph{individual assertions} of the forms: $a : C_1$, $(a,b) : R_1$, 
$a=b$, $a \neq b$, $e_{d} : t_1$, $(a, e_{d}) : P_1$, 
with $C_1$ a $\shdlssx$-concept term, $d$ a data type, $t_1$ a data type term, $R_1$ a $\shdlssx$-abstract role term, $P_1$ a $\shdlssx$-concrete role term, $a,b$ individual names, and $e_{d}$ a constant in $N_{C}(d)$.

The semantics of $\shdlssx$ is given by means of an interpretation $\I= (\Delta^\I, \Delta_{\D}, \cdot^\I)$, where $\Delta^\I$ and $\Delta_{\D}$ are non-empty disjoint domains such that $d^\D\subseteq \Delta_{\D}$, for every $d \in N_{D}$, and
$\cdot^\I$ is an interpretation function.
The definition of the interpretation of concepts and roles, axioms and assertions is illustrated in  Table \ref{semdlss}.
{\small
\begin{longtable}{|>{\centering}m{2.5cm}|c|>{\centering\arraybackslash}m{6.7cm}|}
\hline
Name & Syntax & Semantics \\
\hline

concept & $A$ & $ A^\I \subseteq \Delta^\I$\\

ab. (resp., cn.) rl. & $R$ (resp., $P$ )& $R^\I \subseteq \Delta^\I \times \Delta^\I$ \hspace*{0.5cm} (resp., $P^\I \subseteq \Delta^\I \times \Delta_\D$)\\


individual& $a$& $a^\I \in \Delta^\I$\\

nominal & $\{a\}$ & $\{a\}^\I = \{a^\I \}$\\

dtype  (resp., ng.) & $d$ (resp., $\neg d$)& $ d^\D \subseteq \Delta_\D$ (resp., $\Delta_\D \setminus d^\D $)\\


negative data type term & $ \neg t_1 $ & $  (\neg t_1)^{\D} = \Delta_{\D} \setminus t_1^{\D}$ \\

data type terms intersection & $ t_1 \sqcap t_2 $ & $  (t_1 \sqcap t_2)^{\D} = t_1^{\D} \cap t_2^{\D} $ \\

data type terms union & $ t_1 \sqcup t_2 $ & $  (t_1 \sqcup t_2)^{\D} = t_1^{\D} \cup t_2^{\D} $ \\

constant in $N_{C}(d)$ & $ e_{d} $ & $ e_{d}^\D \in d^\D$ \\



\hline
data range  & $\{ e_{d_1}, \ldots , e_{d_n} \}$& $\{ e_{d_1}, \ldots , e_{d_n} \}^\D = \{e_{d_1}^\D \} \cup \ldots \cup \{e_{d_n}^\D \} $ \\

data range   &  $\psi_d$ & $\psi_d^\D$\\

data range    & $\neg dr$ &  $\Delta_\D \setminus dr^\D $\\

\hline

top (resp., bot.) & $\top$ (resp., $\bot$ )& $\Delta^\I$  (resp., $\emptyset$)\\


negation & $\neg C$ & $(\neg C)^\I = \Delta^\I \setminus C$ \\

conj. (resp., disj.) & $C \sqcap D$ (resp., $C \sqcup D$)& $ (C \sqcap D)^\I = C^\I \cap D^\I$  (resp., $ (C \sqcup D)^\I = C^\I \cup D^\I$)\\


valued exist. quantification & $\exists R.{a}$ & $(\exists R.{a})^\I = \{ x \in \Delta^\I : \langle x,a^\I \rangle \in R^\I  \}$ \\

data typed exist. quantif. & $\exists P.{e_{d}}$ & $(\exists P.e_{d})^\I = \{ x \in \Delta^\I : \langle x, e^\D_{d} \rangle \in P^\I  \}$ \\

self concept & $\exists R.\mathit{Self}$ & $(\exists R.\mathit{Self})^\I = \{ x \in \Delta^\I : \langle x,x \rangle \in R^\I  \}$ \\

nominals & $\{ a_1, \ldots , a_n \}$& $\{ a_1, \ldots , a_n \}^\I = \{a_1^\I \} \cup \ldots \cup \{a_n^\I \} $ \\

\hline

universal role & U & $(U)^\I = \Delta^\I \times \Delta^\I$\\

inverse role & $R^-$ & $(R^-)^\I = \{\langle y,x \rangle  \mid \langle x,y \rangle \in R^\I\}$\\

concept cart. prod. & $ C_1 \times C_2$   &  $ (C_1 \times C_2)^I = C_1^I \times C_2^I$ \\

abstract role complement & $ \neg R $ & $ (\neg R)^\I=(\Delta^\I \times \Delta^\I) \setminus R^\I $\\

abstract role union & $R_1 \sqcup R_2$ & $ (R_1 \sqcup R_2)^\I = R_1^\I \cup R_2^\I $\\

abstract role intersection & $R_1 \sqcap R_2$ & $ (R_1 \sqcap R_2)^\I = R_1^\I \cap R_2^\I $\\

abstract role domain restr. & $R_{C \mid }$ & $ (R_{C \mid })^\I = \{ \langle x,y \rangle \in R^\I : x \in C^\I  \} $\\

concrete role complement & $ \neg P $ & $ (\neg P)^\I=(\Delta^\I \times \Delta^\D) \setminus P^\I $\\

concrete role union & $P_1 \sqcup P_2$ & $ (P_1 \sqcup P_2)^\I = P_1^\I \cup P_2^\I $\\

concrete role intersection & $P_1 \sqcap P_2$ & $ (P_1 \sqcap P_2)^\I = P_1^\I \cap P_2^\I $\\

concrete role domain restr. & $P_{C \mid }$ & $ (P_{C \mid })^\I = \{ \langle x,y \rangle \in P^\I : x \in C^\I  \} $\\

concrete role range restr. & $P_{ \mid t}$ &  $ (P_{\mid t})^\I = \{ \langle x,y \rangle \in P^\I : y \in t^\D  \} $\\

concrete role restriction & $P_{ C_1 \mid t}$ &  $ (P_{C_1 \mid t})^\I = \{ \langle x,y \rangle \in P^\I : x \in C_1^\I \wedge y \in t^\D  \} $\\

\hline

concept subsum. & $C_1 \sqsubseteq C_2$ & $\I \models_\D C_1 \sqsubseteq C_2 \; \Longleftrightarrow \; C_1^\I \subseteq C_2^\I$ \\

ab. role subsum. & $ R_1 \sqsubseteq R_2$ & $\I \models_\D R_1 \sqsubseteq R_2 \; \Longleftrightarrow \; R_1^\I \subseteq R_2^\I$\\

role incl. axiom & $R_1 \ldots R_n \sqsubseteq R$ & $\I \models_\D R_1 \ldots R_n \sqsubseteq R  \; \Longleftrightarrow \; R_1^\I\circ \ldots \circ R_n^\I \subseteq R^\I$\\
cn. role subsum. & $ P_1 \sqsubseteq P_2$ & $\I \models_\D P_1 \sqsubseteq P_2 \; \Longleftrightarrow \; P_1^\I \subseteq P_2^\I$\\

\hline

symmetric role & $\sym(R)$ & $\I \models_\D \sym(R) \; \Longleftrightarrow \; (R^-)^\I \subseteq R^\I$\\

asymmetric role & $\asym(R)$ & $\I \models_\D \asym(R) \; \Longleftrightarrow \; R^\I \cap (R^-)^\I = \emptyset $\\

transitive role & $\tra(R)$ & $\I \models_\D \tra(R) \; \Longleftrightarrow \; R^\I \circ R^\I \subseteq R^\I$\\

disj. ab. role & $\mathsf{Dis}(R_1,R_2)$ & $\I \models_\D \mathsf{Dis}(R_1,R_2) \; \Longleftrightarrow \; R_1^\I \cap R_2^\I = \emptyset$\\

reflexive role & $\refl(R)$& $\I \models_\D \refl(R) \; \Longleftrightarrow \; \{ \langle x,x \rangle \mid x \in \Delta^\I\} \subseteq R^\I$\\

irreflexive role & $\irref(R)$& $\I \models_\D \irref(R) \; \Longleftrightarrow \; R^\I \cap \{ \langle x,x \rangle \mid x \in \Delta^\I\} = \emptyset  $\\

func. ab. role & $\fun(R)$ & $\I \models_\D \fun(R) \; \Longleftrightarrow \; (R^{-})^\I \circ R^\I \subseteq  \{ \langle x,x \rangle \mid x \in \Delta^\I\}$  \\

disj. cn. role & $\mathsf{Dis}(P_1,P_2)$ & $\I \models_\D \mathsf{Dis}(P_1,P_2) \; \Longleftrightarrow \; P_1^\I \cap P_2^\I = \emptyset$\\

func. cn. role & $\fun(P)$ & $\I \models_\D \fun(p) \; \Longleftrightarrow \; \langle x,y \rangle \in P^\I \mbox{ and } \langle x,z \rangle \in P^\I \mbox{ imply } y = z$  \\

\hline

data type terms equivalence & $ t_1 \equiv t_2 $ & $ \I \models_{\D} t_1 \equiv t_2 \Longleftrightarrow t_1^{\D} = t_2^{\D}$\\

data type terms diseq. & $ t_1 \not\equiv t_2 $ & $ \I \models_{\D} t_1 \not\equiv t_2 \Longleftrightarrow t_1^{\D} \neq t_2^{\D}$\\

data type terms subsum. & $ t_1 \sqsubseteq t_2 $ &  $ \I \models_{\D} (t_1 \sqsubseteq t_2) \Longleftrightarrow t_1^{\D} \subseteq t_2^{\D} $ \\

\hline

concept assertion & $a : C_1$ & $\I \models_\D a : C_1 \; \Longleftrightarrow \; (a^\I \in C_1^\I) $ \\

agreement & $a=b$ & $\I \models_\D a=b \; \Longleftrightarrow \; a^\I=b^\I$\\

disagreement & $a \neq b$ & $\I \models_\D a \neq b  \; \Longleftrightarrow \; \neg (a^\I = b^\I)$\\


ab. role asser. & $ (a,b) : R $ & $\I \models_\D (a,b) : R \; \Longleftrightarrow \;  \langle a^\I , b^\I \rangle \in R^\I$ \\

cn. role asser. & $ (a,e_d) : P $ & $\I \models_\D (a,e_d) : P \; \Longleftrightarrow \;   \langle a^\I , e_d^\D \rangle \in P^\I$ \\

\hline \caption{Semantics of $\shdlssx$.}\\
\caption*{\emph{Legenda.} ab: abstract, cn.: concrete, rl.: role, ind.: individual, d. cs.: data type constant, dtype: data type, ng.: negated, bot.: bottom, incl.: inclusion, asser.: assertion.}  \label{semdlss}
\end{longtable}}


Let $\mathcal{R}$, $\mathcal{T}$, and $\mathcal{A}$  be as above. An interpretation $\I= (\Delta ^ \I, \Delta_{\D}, \cdot ^ \I)$ is a $\D$-model of $\mathcal{R}$ (resp., $\mathcal{T}$), and we write $\I \models_{\D} \mathcal{R}$ (resp., $\I \models_{\D} \mathcal{T}$), if $\I$ satisfies each axiom in $\mathcal{R}$ (resp., $\mathcal{T}$) according to the semantic rules in Table \ref{semdlss}.  Analogously,  $\I= (\Delta^ \I, \Delta_{\D}, \cdot^\I)$ is a $\D$-model of $\mathcal{A}$, and we write $\I \models_{\D} \mathcal{A}$, if $\I$ satisfies each assertion in $\mathcal{A}$, according to the semantic rules in Table \ref{semdlss}. 

A $\shdlssx$-knowledge base $\mathcal{K}=(\mathcal{A}, \mathcal{T}, \mathcal{R})$ is consistent if there is an interpretation $\I= (\Delta^ \I, \Delta_{\D}, \cdot^\I)$ that is a $\D$-model of $\mathcal{A}$,  $\mathcal{T}$, and $\mathcal{R}$.

Decidability of the consistency problem for $\shdlssx$-knowledge bases was proved in \cite{CanLonNicSanRR2015} via a reduction to the satisfiability problem for formulae of
a four level quantified syllogistic called \flqsr. The latter problem was proved decidable in \cite{CanNic2013}.
%
%
Some considerations on the expressive power of $\shdlssx$ are in order. As illustrated in \cite[Table~ 1]{RR2017ext} existential quantification is admitted only on the left hand side of inclusion axioms. Thus $\shdlssx$ is less powerful than logics such as $\sroiqd\space$ \cite{Horrocks2006}
for what concerns the generation of new individuals. On the other hand, $\shdlssx$ is more liberal than $\sroiqd\space$ in the definition of role inclusion axioms since roles involved are not required to be subject to any ordering relationship, and the notion of simple role is not needed.
For example, the role hierarchy presented in \cite[page~ 2]{Horrocks2006} is not expressible in $\sroiqd\space$ but can be represented in $\shdlssx$. In addition, $\shdlssx$ is a powerful rule language able to express rules with  negated atoms such as  $Person(?p) \wedge \neg hasCar(?p, ?c) \implies CarlessPerson(?p)$. Notice that rules with negated atoms are not supported by the SWRL language.

\section{ ABox Reasoning services for $\shdlssx$ knowledge base}

The most important feature of a knowledge representation system is the capability of providing reasoning services. Depending on the type of the application domains, there are many different kinds of implicit knowledge that is desirable to infer from what is explicitly mentioned in the knowledge base.
In particular, reasoning problems regarding ABoxes consist in querying a knowledge base in order to retrieve information concerning data stored in it. 
In this section we study the decidability for the most widespread ABox reasoning tasks for the logic $\shdlssx$ resorting to a general problem, called Higher Order Conjuctive Query Answering (HOCQA), that can be instantiated to each of them. 


%
%
Let $\varind  = \{\sfvar{v}{1}, \sfvar{v}{2}, \ldots\}$,  $\varcon = \{\sfvar{c}{1}, \sfvar{c}{2}, \ldots\}$, $\varar = \{\sfvar{r}{1}, \sfvar{r}{2}, \ldots\}$, and $\varcr  = \{\sfvar{p}{1}, \sfvar{p}{2}, \ldots\}$ be pairwise disjoint denumerably infinite sets of variables which are  disjoint from $\Ind$, $\bigcup\{N_C(d): d \in N_{\D}\}$, $\C$, $\Ra$, and $\Rd$. A  HO $\shdlssx$-\emph{atomic formula} is an expression of one of the following types:
  $R(w_1,w_2)$, 
  $P(w_1, u_1)$, 
  $C(w_1)$, 
  $\mathsf{r}(w_1,w_2)$,
  $\mathsf{p}(w_1, u_1)$, 
  $\mathsf{c}(w_1)$, 
  $w_1=w_2$, 
  $u_1 = u_2$, 
where $w_1,w_2 \in \varind \cup \Ind$, $u_1, u_2 \in  \varind \cup \bigcup \{N_C(d): d \in N_{\D}\}$, $R$ is a $\shdlssx$-abstract role term, $P$ is a $\shdlssx$-concrete role term, $C$ is a $\shdlssx$-concept term, $\mathsf{r} \in \varar$, $\mathsf{p} \in \varcr$, and $\mathsf{c} \in \varcon$. A HO $\shdlssx$-atomic formula containing no variables is said to be \emph{ground}. A HO $\shdlssx$-\emph{literal} is a HO $\shdlssx$-atomic formula or its negation. 
 A HO $\shdlssx$-\emph{conjunctive query} is a conjunction of HO $\shdlssx$-literals. 
 We denote with $\lambda$ the \emph{empty} HO $\shdlssx$-conjunctive query.

Let  $\sfvar{v}{1},\ldots,\sfvar{v}{n} \in \varind$, $\sfvar{c}{1}, \ldots, \sfvar{c}{m} \in \varcon$, $\sfvar{r}{1}, \ldots, \sfvar{r}{k} \in \varar$, $\sfvar{p}{1}, \ldots, \sfvar{p}{h} \in \varcr$, $o_1, \ldots, o_n \in \Ind \cup \bigcup \{N_C(d): d \in N_{\D}\}$, $C_1, \ldots, C_m \in \C$, $R_1, \ldots, R_k \in \Ra$, and $P_1, \ldots, P_h \in \Rd$.
 A substitution\\
\centerline{ 
$\sigma  \defAs \{\sfvar{v}{1}/o_1, \ldots, \sfvar{v}{n}/o_n, \sfvar{c}{1}/{C_1}, \ldots, \sfvar{c}{m}/{C_m}, \sfvar{r}{1}/{R_1}, \ldots, \sfvar{r}{k}/{R_k}, \sfvar{p}{1} /{P_1}, \ldots, \sfvar{p}{h}/{P_h} \}
$}
is a map such that, for every  HO $\shdlssx$-literal $L$, $L\sigma$ is obtained from $L$ by replacing
the occurrences of $\sfvar{v}{i}$ in $L$ with $o_i$, for $i=1, \ldots, n$;
the occurrences of $\sfvar{c}{j}$ in $L$ with $C_j$, for $j=1, \ldots, m$;
the occurrences of $\sfvar{r}{\ell}$ in $L$ with $R_\ell$, for $\ell=1, \ldots, k$;
the occurrences of $\sfvar{p}{t}$ in $L$ with $P_t$, for $t=1, \ldots, h$.

Substitutions can be extended to HO $\shdlssx$-conjunctive queries in the usual way. 
Let $Q \defAs  (L_1 \wedge \ldots \wedge L_m)$ be a HO $\shdlssx$-conjunctive query, and $\KB$ a $\shdlssx$-knowledge base. A substitution $\sigma$ involving \emph{exactly} the variables occurring in $Q$ is a \emph{solution for $Q$ w.r.t. $\KB$} if there exists a $\shdlssx$-interpretation $\I$ such that $\I \models_{\D} \KB$ and $\I \models_{\D} Q \sigma$. The collection $\Sigma$ of the  solutions for $Q$ w.r.t. $\KB$ is the \emph{higher order (HO) answer set of $Q$ w.r.t. $\KB$}. Then the \emph{higher order conjunctive query answering} (HOCQA) problem for $Q$ w.r.t. $\KB$ consists in finding the HO answer set $\Sigma$ of $Q$ w.r.t. $\KB$. 
We shall solve the HOCQA problem just stated by reducing it to the analogous problem formulated in the context of the fragment $\flqsr$ (and in turn to the decision procedure for $\flqsr$ presented in \cite{CanNic2013}). The HOCQA problem for $\flqsr$-formulae can be stated as follows.
Let $\phi$ be a $\flqsr$-formula and let $\psi$ be a conjunction of $\flqsr$-quantifier-free atomic formulae of level $0$ of the types $x=y$, $x \in X^1$, $ \langle x,y \rangle \in X^3$,
or their negations.

The \emph{HOCQA problem for $\psi$ w.r.t.\ $\phi$} consists in computing the HO \emph{answer set of $\psi$ w.r.t.\ $\phi$}, namely the collection $\Sigma'$ of all the  substitutions $\sigma'$ such that  $\M \models \phi \wedge \psi\sigma'$, for some $\flqsr$-interpretation $\M$.

 
In view of the decidability of the satisfiability problem for $\flqsr$-formulae, the HOCQA problem for $\flqsr$-formulae is decidable as well. Indeed, let  $\phi$ and $\psi$ be two $\flqsr$-formulae fulfilling the above requirements. To calculate the HO answer set of $\psi$ w.r.t.\ $\phi$, for each candidate substitution\\[0.2cm] 
\centerline{$\sigma' \defAs \{ \vec{x} / \vec{z}, \vec{X^1} / \vec{Y^1}, \vec{X^2} / \vec{Y^2}, \vec{X^3} / \vec{Y^3}  \}$}\\[0.2cm] 
one has just to check for satisfiability of the $\flqsr$-formula $\phi \wedge \psi\sigma'$. Since the number of possible candidate substitutions is $\left| \Vars(\phi) \right|^{\left| \Vars(\psi) \right|}$
and the satisfiability problem for $\flqsr$-formulae is decidable, the HO answer set of $\psi$ w.r.t.\ $\phi$ can be computed effectively. Summarizing,
\begin{lemma}\label{CQA4LQSR}
The HOCQA problem for $\flqsr$-formulae is decidable. \qed
\end{lemma}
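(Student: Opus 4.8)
The plan is to piggy-back on the decidability of the satisfiability problem for $\flqsr$-formulae, established in \cite{CanNic2013}, by enumerating a finite set of candidate substitutions and testing each one. First I would note that, since $\phi$ and $\psi$ are fixed $\flqsr$-formulae, the sets $\var_i(\phi)$ and $\var_i(\psi)$ are finite for every sort $i \in \{0,1,2,3\}$; hence the substitutions $\sigma'$ that map each free variable of $\psi$ to a variable of $\phi$ of the same sort form a finite collection, of cardinality at most $\left|\Vars(\phi)\right|^{\left|\Vars(\psi)\right|}$. These are exactly the substitutions among which the members of the HO answer set $\Sigma'$ can occur, since renaming a variable of $\psi$ to one not occurring in $\phi$ never affects satisfiability and would yield infinitely many spurious solutions.

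Next, for each candidate $\sigma'$ I would form the formula $\phi \wedge \psi\sigma'$ and check that it is again an $\flqsr$-formula. This is the only point that needs a word of care, and it is immediate: $\psi$, being a conjunction of quantifier-free atomic formulae of level $0$ of the admitted shapes ($x=y$, $x\in X^1$, $\langle x,y\rangle\in X^3$) and their negations, contains no purely universal subformula, so $\sigma'$ is trivially free for $\psi$ and $\psi\sigma'$ is a quantifier-free $\flqsr$-formula of the same shape; consequently conjoining it with $\phi$ cannot violate restrictions 1 and 2 in the definition of $\flqsr$, which constrain only purely universal formulae of levels 2 and 3. Since the $\flqsr$-formulae are closed under propositional combination, $\phi \wedge \psi\sigma'$ lies in $\flqsr$.

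Finally I would invoke the decision procedure of \cite{CanNic2013} on each formula $\phi \wedge \psi\sigma'$, retaining precisely those $\sigma'$ for which it reports satisfiability. By the definition of the HO answer set, the set collected in this way is exactly $\Sigma'$, and the whole computation terminates because there are finitely many candidates and each satisfiability test terminates. I do not expect any real obstacle here: the argument is a finite enumeration wrapped around an already-available decision procedure, and the only thing one must not overlook — that $\phi \wedge \psi\sigma'$ remains inside $\flqsr$ rather than merely inside $\flqs$ — is settled by the quantifier-freeness of $\psi$.
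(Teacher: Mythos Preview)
Your argument is correct and follows essentially the same route as the paper: enumerate the at most $|\Vars(\phi)|^{|\Vars(\psi)|}$ candidate substitutions and test each $\phi \wedge \psi\sigma'$ for satisfiability using the decision procedure of \cite{CanNic2013}. You are slightly more explicit than the paper in justifying why $\phi \wedge \psi\sigma'$ remains in $\flqsr$ and why the candidate set may be taken finite, but the core idea is identical.
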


The following theorem states decidability of the HOCQA problem for $\shdlssx$.
\begin{theorem}\label{CQADL}
Given a $\shdlssx$-knowledge base $\KB$ and a HO $\shdlssx$- conjunc\-tive query $Q$, the HOCQA problem for $Q$ w.r.t. $\KB$ is decidable.
\end{theorem}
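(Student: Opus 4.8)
The plan is to reduce the HOCQA problem for $\shdlssx$ to the (decidable) HOCQA problem for $\flqsr$-formulae, invoking Lemma~\ref{CQA4LQSR}. This requires two translations: the set-theoretic encoding of $\shdlssx$-knowledge bases into $\flqsr$-formulae already available from \cite{ictcs16} (itself built on \cite{CanLonNicSanRR2015}), and a companion encoding of HO $\shdlssx$-conjunctive queries into conjunctions of level-$0$ $\flqsr$-literals of the restricted form ($x=y$, $x\in X^1$, $\langle x,y\rangle\in X^3$, and negations) that the $\flqsr$ HOCQA problem admits on the query side.

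First I would recall from \cite{ictcs16} the map $\KB\mapsto\phi_{\KB}$. It fixes a distinguished $\flqsr$-variable of sort~$1$ for every concept name, of sort~$3$ for every abstract and concrete role name, and of sort~$0$ for every individual name and data constant, and produces an $\flqsr$-formula $\phi_{\KB}$ whose $\flqsr$-models are in effective correspondence with the $\D$-models of $\KB$: each $\D$-model $\I$ of $\KB$ yields an $\flqsr$-model of $\phi_{\KB}$ in which the distinguished variables receive the (Kuratowski-style encodings of the) extensions $A^{\I}$, $R^{\I}$, $a^{\I}$, and conversely. This is exactly the machinery already used to decide consistency in \cite{CanLonNicSanRR2015} and CQA in \cite{ictcs16}, so I would quote it as given; in particular it already accounts for the fact that $\shdlssx$ has two semantic domains whereas $\flqsr$ has one.

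Next I would translate the query $Q\defAs L_1\wedge\ldots\wedge L_m$. The obstruction is that a literal may contain an arbitrarily compound concept or role term --- e.g.\ $C(w_1)$ with $C$ a Boolean combination of concepts --- whereas on the $\flqsr$ side one may only assert membership in a single variable. I would therefore first \emph{name} the compound terms: for each compound concept term $C$ occurring in $Q$ add a fresh concept name $A_C$ with $A_C\equiv C$ to $\T$, and similarly name the compound abstract and concrete role terms via the RBox equivalences $R_1\equiv R_2$ and $P_1\equiv P_2$; let $\KB^{+}$ be the resulting knowledge base. The added axioms are definitional and of shapes already handled by the translation of \cite{ictcs16}, so $\phi_{\KB^{+}}$ is still an $\flqsr$-formula and $\KB^{+}$ is a conservative extension of $\KB$. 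Now every non-variable predicate symbol in a literal is a concept name, an abstract role name, or a concrete role name, and each literal maps uniformly to a level-$0$ $\flqsr$-atom or its negation: $A(w_1)$ and $\mathsf{c}(w_1)$ become $x\in X^1$, the role literals $R(w_1,w_2)$, $P(w_1,u_1)$, $\mathsf{r}(w_1,w_2)$, $\mathsf{p}(w_1,u_1)$ become $\langle x,y\rangle\in X^3$, and $w_1=w_2$, $u_1=u_2$ become $x=y$, where $x$ (resp.\ $y$) is the distinguished sort-$0$ variable of $w$ (resp.\ $u$) when the latter is an individual/constant and a fresh sort-$0$ variable when it is a query variable, and the sort-$1$/sort-$3$ variable is the distinguished one of the (named) term, respectively a fresh one for a query variable $\mathsf{c}$, $\mathsf{r}$, $\mathsf{p}$. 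Call $\psi_Q$ the resulting conjunction; it has precisely the form required by the $\flqsr$ HOCQA problem.

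Finally I would match up solutions. Send a substitution $\sigma$ for $Q$ to the $\flqsr$-substitution $\sigma'$ that is the identity on the distinguished variables of $\KB^{+}$ and maps each fresh query variable to the distinguished $\flqsr$-variable of the name $\sigma$ assigns to it (of the matching sort); then $\psi_Q\sigma'$ is the translation of $Q\sigma$, and $\sigma\mapsto\sigma'$ is a bijection between the candidate solutions for $Q$ (where, distinct names not occurring in $\KB^{+}$ being interchangeable, we may take images among the finitely many names in $\KB^{+}$) and a computable subcollection of the candidate $\flqsr$-substitutions for $\psi_Q$ w.r.t.\ $\phi_{\KB^{+}}$. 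By the model correspondence, $\I\models_{\D}\KB^{+}$ and $\I\models_{\D}Q\sigma$ for some $\I$ iff $\M\models\phi_{\KB^{+}}\wedge\psi_Q\sigma'$ for some $\flqsr$-interpretation $\M$; hence the HO answer set of $Q$ w.r.t.\ $\KB$ is the $\sigma'\mapsto\sigma$ image of (the relevant part of) the HO answer set of $\psi_Q$ w.r.t.\ $\phi_{\KB^{+}}$, which is computable by Lemma~\ref{CQA4LQSR}. I expect the main work to lie exactly in the query-side compound terms: carrying out the definitional naming correctly, checking that it leaves the answer set unchanged and keeps $\phi_{\KB^{+}}$ inside $\flqsr$, and reconciling the two notions of substitution so that translated solutions correspond.
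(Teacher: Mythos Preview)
Your proposal is correct and follows essentially the same route as the paper: translate $\KB$ to a $\flqsr$-formula $\phi_{\KB}$, translate $Q$ to a conjunction $\psi_Q$ of level-$0$ literals, lift substitutions through the translation, and invoke Lemma~\ref{CQA4LQSR}. The one minor difference is that you make the handling of compound concept/role terms in $Q$ explicit via a definitional conservative extension $\KB^{+}$, whereas the paper simply assigns a dedicated variable $X^{1}_{C}$ (resp.\ $X^{3}_{R}$) to every concept (resp.\ role) term---compound or not---and relies on the normal-form shape of the $\theta$-clauses (inherited from \cite{ictcs16,CanLonNicSanRR2015}) to supply the needed constraints; the two treatments are interchangeable.
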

\begin{proof}
We first outline the main ideas and then we provide a formal proof of the theorem.

In order to define a $\flqsr formula$ $\phi_\KB$, we recall the definition a function $\theta$ that maps the $\shdlssx$-knowledge base $\KB$  in the $\flqsr$-formula in Conjunctive Normal Form (CNF) $\phi_{\KB}$, introduced in \cite{ExtendedVersionICTCS2016}.
The definition of the mapping $\theta$ is inspired to the definition of the mapping $\tau$ introduced in the proof of Theorem 1 in \cite{CanLonNicSanRR2015}. Specifically, $\theta$ differs from $\tau$ because it allows quantification only on variables of level $0$, it treats Boolean operations on concrete roles and the product of concepts, it constructs $\flqsr$-formulae in CNF and it is extended to $\shdlssx$-HO conjunctive queries.
 To prepare for the definition of $\theta$, we map injectively individuals $a\in \Ind$ and constants $e_d \in N_{C}(d)$
 into level $0$ variables $x_a$, $x_{e_d}$, the constant concepts $\top$ and $\bot$, data type terms $t$, and concept terms $C$ into level $1$ variables $X_{\top}^1$, $X_{\bot}^1$, $X_{t}^1$, $X_{C}^1$, respectively, and the universal relation on individuals $U$, abstract role terms $R$, and concrete role terms $P$ into level $3$ variables $X_{U}^3$, $X_{R}^3$, and $X_{P}^3$, respectively.\footnote{The use of level $3$ variables to model abstract and concrete role terms is motivated by the fact that their elements, that is ordered pairs $\langle x, y \rangle$, are encoded in Kuratowski's style as $\{\{x\}, \{x,y\}\}$, namely as collections of sets of objects.}

Then the mapping $\theta$ is defined as follows:
\smallskip

\noindent $\theta(C_1 \equiv \top) \defAs (\forall z)( ( \neg(z \in X_{C_1}^1) \vee z \in X_{\top}^1) \wedge ( \neg(z \in X_{\top}^1) \vee z \in X_{C_1}^1))$,

\noindent $\theta(C_1 \equiv \neg C_2) \defAs (\forall z)(( \neg(z \in X_{C_1}^1) \vee \neg(z \in X_{C_2}^1)) \wedge (z \in X_{C_2}^1 \vee z \in X_{C_1}^1))$,

\noindent $\theta(C_1 \equiv C_2 \sqcup C_3 ) \defAs (\forall z)( ( \neg(z \in X_{C_1}^1) \vee (z \in X_{C_2}^1 \vee z \in X_{C_3}^1)) \wedge ( (\neg (z \in X_{C_2}^1) \vee z \in X_{C_1}^1) \wedge (\neg (z \in X_{C_3}^1) \vee z \in X_{C_1}^1 ))$,

\noindent $\theta(C_1 \equiv \{a\}) \defAs (\forall z)( \neg(z \in X_{C_1}^1) \vee z = x_a) \wedge( \neg(z = x_a) \vee z \in X_{C_1}^1 )$,

\noindent $\theta(C_1 \sqsubseteq \forall R_1.C_2) \defAs (\forall z_1)(\forall z_2)( \neg(z_1 \in X_{C_1}^1) \vee ( \neg(\langle z_1,z_2 \rangle \in X_{R_1}^3) \vee z_2 \in X_{C_2}^1))$,

\noindent $\theta(\exists R_1.C_1 \sqsubseteq C_2) \defAs (\forall z_1)(\forall z_2)(( \neg(\langle z_1,z_2 \rangle \in X_{R_1}^3) \vee \neg( z_2 \in X_{C_1}^1)) \vee z_1 \in X_{C_2}^1)$,

\noindent $\theta(C_1 \equiv \exists R_1.\{a\}) \defAs(\forall z)( ( \neg(z \in X_{C_1}^1) \vee \langle z,x_{a}\rangle \in X_{R_1}^3) \wedge ( \neg(\langle z,x_{a}\rangle \in X_{R_1}^3) \vee z \in X_{C_1}^1  ) )$,

\noindent $\theta(C_1 \sqsubseteq \leq_n\!\! R_1.C_2) \defAs (\forall z)(\forall z_1)\ldots (\forall z_{n+1})( \neg(z \in X_{C_1}^1) \vee  ( \overset{n+1}{\underset{i=1}\bigwedge}( \neg(z_i \in X_{C_2}) \vee \neg(\langle z,z_i\rangle \in X_{R_1}^3) \vee \underset {i<j} {\bigvee} z_i = z_j))$,

\noindent $\theta(\geq_n\!\! R_1.C_1 \sqsubseteq C_2) \defAs (\forall z)(\forall z_1)\ldots (\forall z_{n})( \overset {n}{ \underset{i=1}\bigwedge }(( \neg(z_i \in X_{C_1}^1) \vee \neg( \langle z,z_i\rangle \in X_{R_1}^3)) \vee  \underset {i<j} \bigvee z_i = z_j) \vee z \in X_{C_2}^1)$,

\noindent $\theta(C_1 \sqsubseteq \forall P_1.t_1) \defAs (\forall z_1)(\forall z_2)( \neg(z_1 \in X_{C_1}^1) \vee ( \neg (\langle z_1,z_2 \rangle \in X_{P_1}^3) \vee z_2 \in X_{t_1}^1))$,

\noindent $\theta(\exists P_1.t_1 \sqsubseteq C_1) \defAs (\forall z_1)(\forall z_2)(( \neg(\langle z_1,z_2 \rangle \in X_{P_1}^3) \vee \neg(z_2 \in X_{t_1}^1)) \vee z_1 \in X_{C_1}^1)$,

\noindent $\theta(C_1 \equiv \exists P_1.\{e_{d}\}) \defAs (\forall z)( ( \neg(z \in X_{C_1}^1) \vee \langle z,x_{e_{d}}\rangle \in X_{P_1}^3)  \wedge ( \neg(\langle z,x_{e_{d}}\rangle \in X_{P_1}^3) \vee z \in X_{C_1}^1) )$,

\noindent $\theta(C_1 \sqsubseteq \leq_n\!\! P_1.t_1) \defAs (\forall z)(\forall z_1)\ldots (\forall z_{n+1})( \neg (z \in X_{C_1}^1) \vee ( \overset {n+1} { \underset{i=1}\bigwedge }( \neg(z_i \in X_{t_1}) \vee \neg(\langle z,z_i\rangle \in X_{P_1}^3) \vee \underset{i<j} {\bigvee} z_i = z_j))$,

\noindent $\theta(\geq_n\!\! P_1.t_1 \sqsubseteq C_1) \defAs (\forall z)(\forall z_1)\ldots (\forall z_{n})( \overset {n} { \underset {i=1}\bigwedge}(( \neg(z_i \in X_{t_1}^1) \vee \neg(\langle z,z_i\rangle \in X_{P_1}^3)) \vee \underset {i<j} {\bigvee} z_i = z_j) \vee z \in X_{C_1}^1)$,

\noindent $\theta(R_1 \equiv U) \defAs (\forall z_1)(\forall z_2)( ( \neg(\langle z_1,z_2\rangle \in X_{R_1}^3) \vee \langle z_1,z_2\rangle \in X_{U}^3) \wedge ( \neg(\langle z_1,z_2\rangle \in X_{U}^3) \vee \langle z_1,z_2\rangle \in X_{R_1}^3) )$,

\noindent $\theta(R_1 \equiv \neg R_2) \defAs (\forall z_1)(\forall z_2)( ( \neg(\langle z_1,z_2\rangle \in X_{R_1}^3) \vee \neg (\langle z_1,z_2\rangle \in X_{R_2}^3 )) \wedge ( \langle z_1,z_2\rangle \in X_{R_2}^3 \vee \neg (\langle z_1,z_2\rangle \in X_{R_1}^3 )) )$,

\noindent $\theta( R \equiv C_1 \times C_2 ) \defAs (\forall z_1)(\forall z_2) (  \neg (\langle z_1, z_2 \rangle \in X^3_R) \vee  z_1 \in X^1_{C_1}) \wedge ( \neg (\langle z_1, z_2 \rangle \in X^3_R) \vee  z_2  \in X^1_{C_2}  ) \wedge  (( \neg(z_1 \in X^1_{C_1}) \vee  \neg(z_2  \in X^1_{C_2})  ) \vee \langle z_1, z_2 \rangle \in X^3_R  ) )$

\noindent $\theta(R_1 \equiv R_2 \sqcup R_3) \defAs (\forall z_1)(\forall z_2)( ( \neg(\langle z_1,z_2 \rangle \in X_{R_1}^3) \vee (\langle z_1,z_2 \rangle \in X_{R_2}^3 \vee \langle z_1,z_2 \rangle \in X_{R_3}^3)) \wedge ( ( \neg(\langle z_1,z_2 \rangle \in X_{R_2}^3) \vee \langle z_1,z_2 \rangle \in X_{R_1}^3) \wedge ((\neg(\langle z_1,z_2 \rangle \in X_{R_3}^3) \vee \langle z_1,z_2 \rangle \in X_{R_1}^3) ) ))$,

\noindent $\theta(R_1 \equiv R_2^{-}) \defAs (\forall z_1)(\forall z_2)( ( \neg(\langle z_1,z_2\rangle \in X_{R_1}^3) \vee \langle z_2,z_1\rangle \in X_{R_2}^3 ) \wedge ( \neg(\langle z_2,z_1\rangle \in X_{R_2}^3) \vee \langle z_1,z_2\rangle \in X_{R_1}^3  )   )  $,

\noindent $\theta(R_1 \equiv id(C_1)) \defAs (\forall z_1)(\forall z_2)( ( ( \neg(\langle z_1,z_2\rangle \in X_{R_1}^3) \vee z_1 \in X_{C_1}^1 ) \wedge ( \neg(\langle z_1,z_2\rangle \in X_{R_1}^3) \vee z_2 \in X_{C_1}^1 ) \wedge (\neg(\langle z_1,z_2\rangle \in X_{R_1}^3) \vee z_1 =z_2)  )\wedge ( ( \neg(z_1 \in X_{C_1}^1) \vee \neg(z_2 \in X_{C_1}^1) \vee z_1 \neq z_2) \vee \langle z_1,z_2\rangle \in X_{R_1}^3)  )$,

\noindent $\theta(R_1 \equiv R_{2_{C_1 |}}) \defAs (\forall z_1)(\forall z_2)( ( (\neg(\langle z_1,z_2\rangle \in X_{R_1}^3) \vee \langle z_1,z_2\rangle \in X_{R_2}^3) \wedge  ( \neg(\langle z_1,z_2\rangle \in X_{R_1}^3) \vee z_1 \in X_{C_1}^1))  \wedge (( \neg(\langle z_1,z_2\rangle \in X_{R_2}^3) \vee \neg(z_1 \in X_{C_1}^1))  \vee \langle z_1,z_2\rangle \in X_{R_1}^3 ) )$,

\noindent $\theta(R_1 \ldots R_n \sqsubseteq R_{n+1}) \defAs (\forall z)(\forall z_1)\ldots (\forall z_{n}) (( \neg(\langle z, z_1\rangle \in X_{R_1}^3) \vee \ldots \vee \neg(\langle z_{n-1},z_n\rangle \in X_{R_{n}}^3)) \vee  \langle z, z_n\rangle\in X_{R_{n+1}}^3)$,

\noindent $\theta(\refl(R_1)) \defAs (\forall z)(\langle z, z\rangle \in X_{R_1}^3)$,

\noindent $\theta(\irref(R_1)) \defAs (\forall z)(\neg (\langle z,z\rangle \in X_{R_1}^3))$,

\noindent $\theta(\fun(R_1)) \defAs (\forall z_1)(\forall z_2)(\forall z_3)(( \neg(\langle z_1,z_2\rangle \in X_{R_1}^3) \vee \neg(\langle z_1,z_3\rangle \in X_{R_1}^3)) \vee z_2 =z_3)$,

\noindent $\theta(P_1 \equiv P_2) \defAs (\forall z_1)(\forall z_2)( ( \neg (\langle z_1,z_2\rangle \in X_{P_1}^3) \vee \langle z_1,z_2\rangle \in X_{P_2}^3) \wedge ( \neg(\langle z_1,z_2\rangle \in X_{P_2}^3) \vee  \langle z_1,z_2\rangle \in X_{P_1}^3)  )$,

\noindent $\theta(P_1 \equiv \neg P_2) \defAs (\forall z_1)(\forall z_2)( ( \neg(\langle z_1,z_2\rangle \in X_{P_1}^3) \vee \neg(\langle z_1,z_2\rangle \in X_{P_2}^3)) \wedge (\langle z_1,z_2\rangle \in X_{P_2}^3 \vee \langle z_1,z_2\rangle \in X_{P_1}^3 ) )$,

\noindent $\theta(P_1 \sqsubseteq P_2) \defAs (\forall z_1)(\forall z_2)( \neg(\langle z_1,z_2\rangle \in X_{P_1}^3) \vee \langle z_1,z_2\rangle \in X_{P_2}^3)$,

\noindent $\theta(\fun(P_1)) \defAs (\forall z_1)(\forall z_2)(\forall z_3)( ( \neg(\langle z_1,z_2\rangle \in X_{P_1}^3) \vee \neg(\langle z_1,z_3\rangle \in X_{P_1}^3) \vee z_2 =z_3)$,

\noindent $\theta(P_1 \equiv P_{2_{C_1 |}}) \defAs (\forall z_1)(\forall z_2) (( \neg(\langle z_1,z_2\rangle \in X_{P_1}^3) \vee \langle z_1,z_2\rangle \in X_{P_2}^3 ) \wedge ( \neg(\langle z_1,z_2\rangle \in X_{P_1}^3) \vee z_1 \in X_{C_1}^1) \wedge ( (\neg \langle z_1,z_2\rangle \in X_{P_2}^3) \vee \neg(z_1 \in X_{C_1}^1) \vee  \langle z_1,z_2\rangle \in X_{P_1}^3) )$,

\noindent $\theta(P_1 \equiv P_{2_{|t_1}}) \defAs (\forall z_1)(\forall z_2) ( ( \neg(\langle z_1,z_2\rangle \in X_{P_1}^3) \vee \langle z_1,z_2\rangle \in X_{P_2}^3 ) \wedge ( \neg(\langle z_1,z_2\rangle \in X_{P_1}^3) \vee z_2 \in X_{t_1}^1) \wedge ( ( \neg(\langle z_1,z_2\rangle \in X_{P_2}^3) \vee \neg(z_2 \in X_{t_1}^1)) \vee \langle z_1,z_2\rangle \in X_{P_1}^3 ) )$,

\noindent $\theta(P_1 \equiv P_{2_{C_1|t_1}}) \defAs (\forall z_1)(\forall z_2) ( ( \neg(\langle z_1,z_2\rangle \in X_{P_1}^3) \vee \langle z_1,z_2\rangle \in X_{P_2}^3 ) \wedge ( \neg(\langle z_1,z_2\rangle \in X_{P_1}^3) \vee z_1 \in X_{C_1}^1) \wedge ( \neg(\langle z_1,z_2\rangle \in X_{P_1}^3) \vee z_2 \in X_{t_1}^1)
 \wedge (    \neg(\langle z_1,z_2\rangle \in X_{P_2}^3) \vee \neg(z_1 \in X_{C_1}^1) \vee \neg(z_2 \in X_{t_1}^1)  \vee \langle z_1,z_2\rangle \in X_{P_1}^3) )$,

\noindent $\theta(t_1 \equiv t_2)\defAs (\forall z)(( \neg(z \in X_{t_1}^1) \vee z \in X_{t_2}^1) \wedge ( \neg(z \in X_{t_2}^1) \vee z \in X_{t_1}^1 ))$,
\noindent $\theta(t_1 \equiv \neg t_2)\defAs (\forall z)(( \neg(z \in X_{t_1}^1) \vee \neg (z \in X_{t_2}^1)) \wedge ( z \in X_{t_2}^1  \vee z \in X_{t_1}^1) )$,

\noindent $\theta(t_1 \equiv t_2 \sqcup t_3)\defAs (\forall z)( ( \neg(z \in X_{t_1}^1) \vee (z \in X_{t_2}^1\vee z \in X_{t_3}^1)) \wedge ( ( \neg (z \in X_{t_2}^1) \vee  z \in X_{t_1}^1) \wedge (\neg (z \in X_{t_3}^1) \vee  z \in X_{t_1}^1) ) )$,

\noindent $\theta(t_1 \equiv t_2 \sqcap t_3)\defAs (\forall z)(( \neg(z \in X_{t_1}^1) \vee (z \in X_{t_2}^1 \wedge z \in X_{t_3}^1))  \wedge ((( \neg(z \in X_{t_2}^1) \vee \neg(z \in X_{t_3}^1))  \vee z \in X_{t_1}^1)  )$,

\noindent $\theta(t_1 \equiv \{e_{d}\})\defAs (\forall z)((\neg (z \in X_{t_1}^1) \vee z = x_{e_{d}}) \wedge ( \neg(z = x_{e_{d}})  \vee z \in X_{t_1}^1) )$,

\noindent $\theta(a : C_1) \defAs x_a \in X_{C_1}^1$,

\noindent $\theta((a,b) : R_1) \defAs \langle x_a, x_b\rangle \in X_{R_1}^3$,

\noindent $\theta((a,b) : \neg R_1) \defAs \neg(\langle x_a, x_b\rangle \in X_{R_1}^3)$,

\noindent $\theta(a=b) \defAs x_a = x_b$, $\theta(a\neq b) \defAs \neg (x_a = x_b)$,

\noindent $\theta(e_d : t_1) \defAs x_{e_d} \in X_{t_1}^1$,

\noindent $\theta((a,e_d) : P_1) \defAs \langle x_a, x_{e_d}\rangle \in X_{P_1}^3$, $\theta((a,e_d) : \neg P_1) \defAs \neg(\langle x_a, x_{e_d}\rangle \in X_{P_1}^3)$,

\noindent $\theta(\alpha \wedge \beta) \defAs \theta(\alpha) \wedge \theta(\beta)$.

%

Let $\mathcal{KB}$ be our $\shdlssx$-knowledge base, and let $\ck$, $\ark$, $\crk$, and $\ik$ be, respectively, the sets of concept, of abstract role, of concrete role, and of individual names in $\mathcal{KB}$. Moreover, let $N_{D}^\mathcal{KB} \subseteq N_{D}$ be the set of data types in $\mathcal{KB}$, $N_{F}^\mathcal{KB}$ a restriction of $N_{F}$ assigning to every $d \in N_{\D}^\mathcal{KB}$ the set $N_{F}^\mathcal{KB}(d)$ of facets in $N_{F}(d)$ and in $\mathcal{KB}$. Analogously, let $N_{C}^{\mathcal{KB}}$ be a restriction of the function $N_{C}$ associating to every $d \in N_{\D}^\mathcal{KB}$ the set $N_{C}^\mathcal{KB}(d)$ of constants  contained in $N_{C}(d)$ and in $\mathcal{KB}$. Finally, for every data type $d \in N_{D}^\mathcal{KB}$, let $\bfk(d)$ be the set of facet expressions for $d$ occurring in $\mathcal{KB}$ and not in $N_{F}(d) \cup \{\top^{d},\bot_{d}\}$. We assume without loss of generality that the facet expressions in $\bfk(d)$ are in Conjunctive Normal Form. We define the \flqsr-formula $\phi_{\mathcal{KB}}$ expressing the consistency of $\mathcal{KB}$ as follows: {\small
\[
\phi_{\mathcal{KB}} \defAs \underset {H \in \mathcal{KB}}\bigwedge \theta(H) \wedge \bigwedge_{i=1}^{12}\xi_i   \, ,
\]}
where

\noindent$\xi_1 \defAs (\forall z)( ( \neg(z \in X_{\I}^1) \vee \neg(z \in X_{\D}^1)) \wedge (z \in X_{\D}^1 \vee  z \in X_{\I}^1))\wedge (\forall z)(z \in $

\noindent $\hfill  X_{\I}^1 \vee z \in X_{\D}^1)\wedge \neg (\forall z)\neg (z \in X_{\I}^1) \wedge \neg (\forall z)\neg (z \in X_{\D}^1)$,\\
 

\noindent$\xi_2 \defAs ((\forall z)( ( \neg(z \in X_{\I}^1) \vee z \in X_{\top}^1) \wedge (\neg (z \in X_{\top}^1)  \vee z \in X_{\I}^1) ) \wedge (\forall z)\neg (z \in$

\noindent $\hfill  X_{\bot})$,\\


\noindent$\xi_3 \defAs \underset{A \in \ck}\bigwedge (\forall z)( \neg(z \in X_{A}^1) \vee z \in X_{\I}^1)$,\\


 \noindent$\xi_4 \defAs ( \underset{d \in N_{D}^\mathcal{KB}}\bigwedge((\forall z)( \neg(z \in X_{d}^1) \vee z \in X_{\D}^1) \wedge \neg (\forall z)\neg(z \in X_{d}^1))  \wedge (\forall z)$
 
 \noindent $\hfill  (\underset{(d_i,d_j \in N_{D}^\mathcal{KB}, i < j)}\bigwedge ( ( \neg(z \in X_{d_i}^1) \vee \neg (z \in X_{d_j}^1)) \wedge ( z \in X_{d_j}^1 \vee  z \in X_{d_i}^1 ) )))$,\\
 

\noindent$\xi_5 \defAs \underset{d \in N_{D}^\mathcal{KB}}\bigwedge((\forall z)( ( \neg(z \in X_{d}^1) \vee z \in X_{\top_d}^1) \wedge ( \neg(z \in X_{\top_d}^1)  \vee z \in X_{d}^1  ) \wedge $

\noindent $\hfill (\forall z)\neg(z \in X_{\bot_d}^1))$,\\


\noindent$\xi_6 \defAs   \underset {\substack{\\ f_d \in N_{F}^\mathcal{KB}(d),\\ d \in N_{D}^\mathcal{KB}}}\bigwedge (\forall z)( \neg(z \in X_{f_d}^1) \vee z \in X_{d}^1)$,\\


\noindent$\xi_7 \defAs (\forall z_1)(\forall z_2)( ( \neg(z_1 \in X_{\I}^1) \vee \neg(z_2 \in X_{\I}^1) \vee \langle z_1,z_2 \rangle \in X_{U}^3) \wedge (    (\neg(\langle z_1,z_2 \rangle \in$

\noindent $\hfill  X_{U}^3) \vee z_1 \in X_{\I}^1) \wedge ( \neg(\langle z_1,z_2 \rangle \in X_{U}^3) \vee z_2 \in X_{\I}^1 ) ) )$,\\


\noindent$\xi_8 \defAs  \underset {R \in \ark} \bigwedge
(\forall z_1)(\forall z_2) ( (\neg(\langle z_1,z_2\rangle \in X_R^3) \vee z_1 \in X_{\I}^1 ) \wedge ( \neg(\langle z_1,z_2\rangle \in X_R^3) \vee z_2 \in X_{\I}^1)))$,\\


\noindent$\xi_9 \defAs \underset{T \in \crk} \bigwedge  (\forall z_1)(\forall z_2) (\neg(\langle z_1,z_2\rangle \in X_T^3) \vee z_1 \in X_{\I}^1) \wedge ( \neg(\langle z_1,z_2\rangle \in X_T^3) \vee z_2 \in X_{\D}^1)))$,\\


\noindent$\xi_{10} \defAs \underset{a \in \ik} \bigwedge(x_a \in X_{\I}^1) \wedge \underset { \substack{ \\ d \in N_{D}^\mathcal{KB}, \\{e_d \in N_{C}^\mathcal{KB}(d)}}} \bigwedge  x_{e_d} \in X_{d}^1$,\\


\noindent$\xi_{11} \defAs \underset {\{e_{d_1},\ldots, e_{d_n}\} \textrm{ in } \mathcal{KB}} \bigwedge (\forall z) ( ( \neg(z \in X_{\{e_{d_1},\ldots, e_{d_n}\}}^1) \vee \overset{n}  { \underset {i=1} \bigvee }(z = x_{e_{d_i}})) \wedge (  \overset{n}  { \underset {i=1} \bigwedge }(z \neq$

\noindent $\hfill  x_{e_{d_i}}  \vee  z \in X_{\{e_{d_1},\ldots, e_{d_n}\}}^1 ) ) )
 \wedge \quad \underset {\{a_{1},\ldots, a_{n}\} \textrm{ in } \mathcal{KB}} \bigwedge (\forall z)( (\neg(z \in X_{\{a_{1},\ldots, a_{n}\}}^1) \vee  $
 
 \noindent $\hfill \overset {n}{\underset {i=1} \bigvee}(z = x_{a_{i}})) \wedge (  \overset {n}{\underset {i=1} \bigwedge}(z \neq x_{a_{i}} \vee z \in X_{\{a_{1},\ldots, a_{n}\}}^1)) )$,\\
 

\noindent$\xi_{12} \defAs \underset { \substack{d \in N_{\D}^\mathcal{KB},\\[1.5pt] {\psi_d \in \bfk(d)}}} \bigwedge  (\forall z) ( \neg(z \in X_{\psi_d}^1) \vee z \in \zeta(X_{\psi_d}^1)) \wedge ( \neg(z \in \zeta(X_{\psi_d}^1)) \vee z \in X_{\psi_d}^1 )$


\medskip

with $\zeta$ the transformation function from \flqsr-variables of level 1 to \flqsr-formulae recursively defined, for $d \in N_\D^\mathcal{KB}$, by
\[ {
\zeta(X_{\psi_d}^1) \defAs \begin{cases}
X_{\psi_d}^1 & \text{if } \psi_d \in N_{F}^\mathcal{KB}(d) \cup \{\top^{d},\bot_{d}\}\\
\neg \zeta(X_{\chi_d}^1) & \text{if }  \psi_d = \neg \chi_d\\
\zeta(X_{\chi_d}^1) \wedge \zeta(X_{\varphi_d}^1) & \text{if } \psi_d = \chi_d \wedge \varphi_d\\
\zeta(X_{\chi_d}^1) \vee \zeta(X_{\varphi_d}^1) & \text{if } \psi_d = \chi_d \vee \varphi_d\,.
\end{cases} }
\]
\noindent In the above formulae, the variable $X_{\I}^1$ denotes the set of individuals $\Ind$, $X_{d}^1$ a data type $d \in N_{D}^\mathcal{KB}$, $X_{\D}^1$  a superset of the union of data types in $ N_{D}^\mathcal{KB}$, $X_{\top_d}^1$ and $X_{\bot_d}^1$ the constants $\top_d$ and $\bot_d$, and $X_{f_d}^1$, $X_{\psi_d}^1$ a facet $f_d$ and a facet expression $\psi_d$, for $d \in N_{D}^\mathcal{KB}$, respectively. In addition, $X_{A}^1$, $X_{R}^3$, $X_{T}^3$ denote a concept name $A$, an abstract role name $R$, and a concrete role name $T$ occurring in $\mathcal{KB}$, respectively. Finally, $X_{\{e_{d_1},\ldots,e_{d_n}\}}^1$ denotes a data range $\{e_{d_1},\ldots,e_{d_n}\}$ occurring in $\mathcal{KB}$, and $X_{\{a_{1},\ldots,a_{n}\}}^1$ a finite set $\{a_1,\ldots,a_n\}$ of nominals in $\mathcal{KB}$.

 The constraints $\xi_1-\xi_{12}$, slightly different from the constraints $\psi_1-\psi_{12}$ defined in the proof of Theorem 1 in \cite{CanLonNicSanRR2015}, are introduced to guarantee that each model of $\phi_{\KB}$ can be easily transformed in a $\shdlssx$-interpretation.

 The HOCQA problem for $\shdlssx$ can be solved via an effective reduction to the HOCQA problem for $\flqsr$-formulae, and then exploiting Lemma \ref{CQA4LQSR}. The reduction is accomplished through the function $\theta$ extended in order to map also $\shdlssx$-conjunctive queries into $\flqsr$-formulae in conjunctive normal form (CNF), which can be used to map effectively HOCQA problems from the $\shdlssx$-context into the $\flqsr$-context. More specifically, given a $\shdlssx$-knowledge base $\KB$ and a $\shdlssx$-HO conjunctive query $Q$, using the function $\theta$ we can effectively construct the following $\flqsr$-formulae in CNF:\\[0.2cm]
\centerline{
$\phi_{\KB} \defAs \bigwedge_{H \in \KB} \theta(H) \wedge \bigwedge_{i=1}^{12} \xi_i, \qquad \psi_Q \defAs \theta(Q)\,.$
}\\[.2cm]
Then, if we denote by $\Sigma$ the higher order answer set of $Q$ w.r.t.\ $\KB$ and by $\Sigma'$ the higher order answer set of $\psi_Q$ w.r.t.\ $\phi_{\KB}$, we have that $\Sigma$ consists of all substitutions $\sigma$ (involving exactly the variables occurring in $Q$) such that $\theta(\sigma) \in \Sigma'$.
Since, by Lemma~\ref{CQA4LQSR}, $\Sigma'$ can be computed effectively, then $\Sigma$ can be computed effectively too.

\medskip \noindent The mapping $\theta$  is extended for $\shdlssx$-HO conjuctive queries as follows.\\

\noindent$\theta (R_1 (w_1,w_2)) \defAs \langle x_{w_1}, x_{w_2} \rangle \in X^3_{R_1}$,\\
$\theta (P_1 (w_1,u_1)) \defAs \langle x_{w_1}, x_{u_1} \rangle \in X^3_{P_1}$,\\
$\theta ( C_1(w_1) \defAs x_{w_1} \in X^1_{C_1}$, \\
$\theta (w_1 =w_2) \defAs x_{w_1} = x_{w_2}$, \\
$\theta (u_1 =u_2) \defAs x_{u_1} = x_{u_2}$.\\
$\theta (\sfvar{c}{1} (w_1)) \defAs  w_1 \in X^1_{\sfvar{c}{1}}$.\\ 
$\theta (\sfvar{r}{1} (w_1,w_2)) \defAs \langle w_1, w_2 \rangle \in X^3_{\sfvar{r}{1}}$.\\  
$\theta (\sfvar{p}{1} (w_1,u_1)) \defAs \langle w_1, u_1 \rangle \in X^3_{\sfvar{p}{1}}$.\\ 



To complete, we extend the mapping $\theta$ on substitutions 
\begin{equation*}
\sigma \defAs \{ \sfvar{v}{1}/o_1, \ldots \sfvar{v}{n} / o_n, \sfvar{c}{1} / C_1, \ldots, \sfvar{c}{m} / C_m, \sfvar{r}{1} / R_1, \ldots, \sfvar{r}{k}, / R_k, \sfvar{p}{1} / P_1, \ldots \sfvar{p}{h} / P_h \}
\end{equation*}

with $\sfvar{v}{1},\ldots,\sfvar{v}{n} \in \varind$, $\sfvar{c}{1}, \ldots, \sfvar{c}{m} \in \varcon$, $\sfvar{r}{1}, \ldots, \sfvar{r}{k} \in \varar$, $\sfvar{p}{1}, \ldots, \sfvar{p}{h} \in \varcr$, $o_1, \ldots, o_n \in \Ind \cup \bigcup \{N_C(d): d \in N_{\D}\}$, $C_1, \ldots, C_m \in \C$, $R_1, \ldots, R_k \in \Ra$, and $P_1, \ldots, P_h \in \Rd$.
 
 We put  
 \begin{equation} \label{sigma1}
 \begin{split} 
  \theta(\sigma)= & \theta ( \{ \sfvar{v}{1}/o_1, \ldots \sfvar{v}{n} / o_n, \sfvar{c}{1} / C_1, \ldots, \sfvar{c}{m} / C_m, \sfvar{r}{1} / R_1, \ldots, \sfvar{r}{k}, / R_k,  \\ &
    \sfvar{p}{1} / P_1, \ldots \sfvar{p}{h} / P_h \}) \\ 
  = & \{ x_{\sfvar{v}{1}}/x_{o_1}, \ldots, x_\sfvar{v}{n} / x_{o_n}, X^1_\sfvar{c}{1} / X^1_{C_1}, \ldots, X^1_\sfvar{c}{m} / X^1_{C_m},  X^3_\sfvar{r}{1} / X^3_{R_1}, \ldots, X^3_\sfvar{r}{k}, / X^3_{R_k},\\
    & X^3_{\sfvar{p}{1}} / X^3_{P_1}, \ldots, X^3_\sfvar{p}{h} / X^3_{P_h} \} \\  = & \sigma' \\
 \end{split}
 \end{equation}

 

\noindent where $ x_{\sfvar{v}{1}}, \ldots x_\sfvar{v}{n}$, $x_{o_1}, \ldots, x_{o_n}$ are variables of level $0$, $X^1_\sfvar{c}{1}, \ldots, X^1_\sfvar{c}{m}$, $X^1_{C_1}, \ldots, X^1_{C_m}$ are variables of level $1$, $ X^3_\sfvar{r}{1}, \ldots, X^3_\sfvar{r}{k}$, $X^3_\sfvar{p}{1}, \ldots, X^3_\sfvar{p}{h}$, $X^3_{R_1}, \ldots, X^3_{R_k}$, and $X^3_{P_1}, \ldots, X^3_{P_h}$ are variables of level 3 in $\flqsr$.   
 


To prove the theorem,  we show that $\Sigma$ is the higher order answer set for $Q$ w.r.t. $\KB$ iff $\Sigma$ is equal to $\overset{}{\underset{\M \models \phi_{\KB}}{\bigcup}}  \Sigma_{\M}'$, where $\Sigma_{\M}'$ is the collection of substitutions $\sigma$ such that $\M \models \psi_{Q}\sigma$. 
Let us assume that $\Sigma$ is higher order the answer set for $Q$ w.r.t. $\KB$. We have to show that $\Sigma$ is equal to $ \Sigma' = \underset{\M \models \phi_{\KB}}{\bigcup} \Sigma'_{\M}$, where $\Sigma'_{\M}$ is the collection of all the substitutions $\sigma'$ such that $\M \models \psi_{Q}\sigma'$.
 
By contradiction, let us assume that there exists a $\sigma \in \Sigma$ such that $\sigma \notin \Sigma'$, namely $\M \not\models \psi_{Q}\sigma$, for every $\flqsr$-interpretation $\M$ with $\M \models \phi_{\KB}$. Since $\sigma \in \Sigma$ there is a $\shdlssx$-interpretation $\I$ such that $\I \models_{\D} \KB$ and $\I \models_{\D} Q\sigma$. Then, by the construction above, we can define a $\flqsr$-interpretation $\M_{\I}$ such that $\M_{\I} \models \phi_{\KB}$ and $\M_{\I}\models \psi_Q\theta{\sigma}$. Absurd. 

Conversely, let $\sigma' \in \Sigma'$ and assume by contradiction that $\sigma' \notin \Sigma$. 
Then, for all $\shdlssx$-interpretations such that $\I \models_D \KB$, it holds that $\I \not\models_D Q\sigma'$. 
Since $\sigma'\in \Sigma'$, there is a $\flqsr$-interpretation $\M$ such that $\M \models \phi_{\KB}$ and $\M \models \psi\sigma'$. Then, by the construction above, we can define a $\shdlss$-interpretation $\I_{\M}$ such that  $\I_{\M} \models_D \KB$ and $\I_{\M} \models_D Q\sigma'$. Absurd.  \qed
\end{proof}
In what follows we list the most widespread reasoning services for $\shdlssx$ -ABox and then show how to define them as particular cases of the HOCQA task. 

\begin{enumerate}
\item{\emph{Instance checking}: the problem of deciding whether or not an individual $a$ is an instance of a concept $C$.}
\item{\emph{Instance retrieval}: the problem of retrieving all the individuals that are instances of a given concept.}
\item{\emph{Role filler retrieval}: the problem of retrieving all the fillers $x$ such that the pair $(a,x)$ is an instance of a role $R$.}
\item{\emph{Concept retrieval}: the problem of retrieving all concepts which an individual is an instance of.}
\item{\emph{Role instance retrieval}: the problem of retrieving all roles which a pair of individuals $(a,b)$ is an instance of.}
\end{enumerate}
The instance checking problem is a specialization of the HOCQA problem admitting  HO $\shdlssx$-conjunctive queries of the form $Q_{IC}=C(w_1)$, with $w_1 \in \Ind$. The  instance retrieval problem is a particular case of the HOCQA problem in which HO $\shdlssx$-conjunctive queries have the form $Q_{IR}=C(w_1)$, where $w_1$ is a variable in $\varind$. The HOCQA problem can be instantiated to the role filler retrieval problem by admitting HO $\shdlssx$-conjunctive queries $Q_{RF}=R(w_1,w_2)$,  with $w_1 \in \Ind$ and $w_2$ a variable in $\varind$. The  concept retrieval problem is a specialization of the HOCQA problem allowing  HO $\shdlssx$-conjunctive queries of the form $Q_{QR}=c(w_1)$, with $w_1 \in \Ind$ and $c$ a  variable in $\varcon$. Finally, the role instance retrieval problem is a particularization of the HOCQA problem, where  HO $\shdlssx$-conjunctive queries have the form $Q_{RI}=r(w_1,w_2)$, with $w_1,w_2 \in \Ind$ and $r$  a  variable in $\varcr$.
%

Notice that the CQA problem for 
$\shdlssx$ defined in \cite{ictcs16} is an instance of the HOCQA problem admitting HO $\shdlssx$-conjunctive queries of the form $Q_{CQA} \defAs (L_1 \wedge \ldots \wedge L_m)$, with $L_i$ an atomic formula of any of the types $R(w_1,w_2)$, $C(w_1)$, and $w_1=w_2$ (or their negation), where  $w_1,w_2 \in(\Ind \cup\varind)$. Notice also that problems 1, 2, and 3 are instances of the CQA problem for $\shdlssx$, whereas problems 4 and 5
fall outside the definition of CQA. As shown above, they can be treated as specializations of HOCQA.  

\section{An algorithm for the HOCQA problem for $\shdlssx$ }
In this section we introduce an effective set-theoretic procedure to compute the answer set of a HO $\shdlssx$-conjunctive query $Q$ w.r.t. a $\shdlssx$ knowledge base $\KB$. Such procedure, called \textit{HOCQA-$\shdlssx$}, takes as input $\phi_\KB$ (i.e., the $\flqsr$-translation of $\KB$) and $\psi_Q$ (i.e., the $\flqsr$-formula representing the HO $\shdlssx$-conjunctive query $Q$), and returns a  \ke\space $\T_\KB$, representing the saturation of $\KB$, and the answer set $\Sigma'$ of $\psi_Q$ w.r.t. $\phi_\KB$, namely the collection of all substitutions $\sigma'$ such that $\M \models \phi_\KB \wedge \psi_Q\sigma'$, for some $\flqsr$-interpretation $\M$. 
Specifically, \textit{HOCQA}-$\shdlssx$ constructs for each open branch of $\T_\KB$ a decision tree whose leaves are labelled with elements of $\Sigma'$.

In the following we introduce definitions, notions, and notations useful for the presentation of Procedure \textit{HOCQA}-$\shdlssx$.

Assume without loss of generality that universal quantifiers in $\phi_\KB$ occur as inward as possible and that universally quantified variables are pairwise distinct.
Let $S_1, \ldots, S_m$ be the conjuncts of  $\phi_\KB$ 
having the form of $\flqsr$-purely universal formulae. For each $S_i \defAs (\forall z_1^i) \ldots (\forall z_{n_i}^i) \chi_i$, with $i=1,\ldots,m$, we put\\[.4cm]
\centerline{$Exp(S_i) \defAs \underset{ \{x_{a_1}, \ldots, x_{a_{n_i}}\} \subseteq \varz(\phi_{\KB})}{\bigwedge} S_i \{z_1^i / x_{a_1}, \ldots, z^i_{n_i} / x_{a_{n_i}} \}$.}\\[.3cm]
Let us also define the \emph{expansion} $\Phi_\KB$ of $\phi_\KB$ by putting
\begin{equation}\label{phikb}
\Phi_\KB \defAs \{ F_j : i=1,\ldots,k \} \cup \overset{m}{ \underset{i=1}{\bigcup}} Exp(S_i)\,,
\end{equation}
where $F_1, \ldots, F_k$ are the conjuncts of  $\phi_\KB$ having the form of $\flqsr$-quantifier free atomic formulae.

To prepare for Procedure \textit{HOCQA-$\shdlssx$}  to be described next, a brief introduction on the \ke\space system is in order (see \cite{dagostino1999} for a detailed overview of \ke).  \ke\space is a refutation system inspired to Smullyan's semantic tableaux \cite{smullyan1995first}. The main characteristic distinguishing  \ke\space from the latter is the introduction of an analytic cut rule (PB-rule) that permits to reduce inefficiencies of semantic tableaux. In fact, firstly, the classic tableau system can not represent the use of auxiliary lemmas in proofs; secondly, it can not express the bivalence of classical logic. Thirdly, it is extremely inefficient, as witnessed by the fact that it can not polynomially simulate the truth-tables. None of these anomalies occurs if the cut rule is permitted. 
 For these reasons, Procedure  \textit{HOCQA-$\shdlssx$}  constructs a complete \ke\space $\T_{\KB}$ for the expansion $\Phi_\KB$ of $\phi_\KB$ (cf. (\ref{phikb})), representing the saturation of the $\shdlssx$-knowledge base $\KB$. 


Let $\Phi \defAs \{ C_1,\ldots, C_p\}$ be a collection of disjunctions of $\flqsr$-quantifier free atomic formulae of level $0$ of the types: $x =y$, $x \in X^1$, $\langle x,y\rangle \in X^3$. $\mathcal{T}$ is a \textit{\ke}\space for $\Phi$ if there exists a finite sequence $\mathcal{T}_1, \ldots, \mathcal{T}_t$ such that (i) $\mathcal{T}_1$ is a one-branch tree consisting of the sequence $C_1,\ldots, C_p$, (ii) $\mathcal{T}_t = \mathcal{T}$, and (iii) for each $i<t$, $\mathcal{T}_{i+1}$ is obtained from $\mathcal{T}_i$ either by an application of one of the  rules in Fig. \ref{exprule} or by applying a substitution $\sigma$ to a branch $\vartheta$ of $\mathcal{T}_i$ (in particular, the substitution $\sigma$ is applied to each formula $X$ of $\vartheta$; the resulting branch will be denoted with $\vartheta\sigma$). The set of formulae $\seq \defAs \{ \overline{\beta}_1,\ldots,\overline{\beta}_n\} \setminus \{\overline{\beta}_i\}$ occurring as premise in the E-rule contains the  complements of all the components of the formula $\beta$ with the exception of the component $\beta_i$.  

\vspace{-.2cm}
\begin{figure}
{{\small
\begin{center}
\begin{minipage}[h]{5cm}
$\infer[\textbf{E-Rule}]
{\beta_i}{\beta_1 \vee \ldots \vee \beta_n & \quad \seq}$\\[.1cm]
{ where $\seq \defAs \{ \overline{\beta}_1,...,\overline{\beta}_n\} \setminus \{\overline{\beta}_i\}$,}\\[-.1cm] { for $i=1,...,n$}
\end{minipage}~~~~~~~~~~~
\begin{minipage}[h]{2.5cm}
\vspace{0.1cm}
$\infer[\textbf{PB-Rule}]
{A~~|~~\overline{A}}{}$\\[.1cm]
{ with $A$ a literal}
\end{minipage}
\end{center}
\vspace{-.2cm}
}

\caption{\label{exprule}Expansion rules for the \ke.}
}
\end{figure}


Let $\mathcal{T}$ be a \ke. A branch $\vartheta$ of $\mathcal{T}$  is \textit{closed} if it contains either both $A$ and $\neg A$, for some formula $A$, or a literal of type $\neg(x = x)$. Otherwise, the branch is \textit{open}. A \ke\space is \emph{closed} if all its branches are closed. A formula $\beta_1 \vee \ldots \vee \beta_n$ is \textit{fulfilled} in a branch $\vartheta$, if $\beta_i$ is in $\vartheta$, for some $i=1,\ldots,n$. A branch $\vartheta$ is \textit{fulfilled} if every formula $\beta_1 \vee \ldots \vee \beta_n$ occurring in $\vartheta$ is fulfilled. 
A branch $\vartheta$ is \textit{complete} if either it is closed or it is open, fulfilled, and it does not contain any literal of type $x=y$, where $x$ and $y$ are distinct variables. A \ke\space is \textit{complete} (resp., \emph{fulfilled}) if all its  branches are complete (resp., fulfilled or closed).  

 A $\flqsr$-interpretation $\M$ \emph{satisfies} a branch $\vartheta$ of a \ke\space (or, equivalently, $\vartheta$ \emph{is satisfied} by $\M$), and we write $\M \models \vartheta$, if $\M \models X$ for every formula $X$ occurring in $\vartheta$. 
 A $\flqsr$-interpretation $\M$ satisfies a  \ke\space $\mathcal{T}$ (or, equivalently, $\mathcal{T}$ \emph{is satisfied} by $\M$), and we write $\M \models \mathcal{T}$, if $\M$ satisfies a branch $\vartheta$ of $\mathcal{T}$. 
A branch $\vartheta$ of a \ke\space $\mathcal{T}$ is \emph{satisfiable} if there exists a $\flqsr$-interpretation $\M$ that satisfies $\vartheta$. A \ke\space is satisfiable if at least one of its branches is satisfiable.

Let $\vartheta$ be a branch of a \ke. We denote with $<_{\vartheta}$ an arbitrary but fixed total order on the variables in $\mathsf{Var}_{0}(\vartheta)$.

Procedure \textit{HOCQA}-$\shdlssx$ takes care of literals of type $x=y$ occurring in the branches of $\T_\KB$ by constructing, for each open and fulfilled branch $\vartheta$ of $\T_\KB$ a substitution $\sigma_{\vartheta}$ such that $\vartheta\sigma_{\vartheta}$ does not contain literals of type $x=y$ with distinct $x,y$. Then, for every open and complete branch $\vartheta':=\vartheta\sigma_{\vartheta}$ of $\T_{\KB}$, Procedure \textit{HOCQA}-$\shdlssx$ constructs a decision tree $\DT_{\vartheta'}$ such that every maximal branch of $\DT_{\vartheta'}$ induces a substitution $\sigma'$ 
such that $\sigma_{\vartheta}\sigma'$ belongs to the answer set of $\psi_{Q}$ with respect to $\phi_{\KB}$. $\DT_{\vartheta'}$ is defined as follows.

Let $d$ be the number of literals in $\psi_Q$. Then $\DT_{\vartheta'}$ is a finite labelled tree of depth $d+1$ whose labelling satisfies the following conditions, for $i=0,\ldots,d$:
\begin{itemize}[itemsep=0.2cm]
\item[(i)]  every node of $\DT_{\vartheta'}$ at level $i$ is labelled with $(\sigma'_i, \psi_Q\sigma_{\vartheta}\sigma'_i)$; in particular, the root is labelled with
$(\sigma'_0, \psi_Q\sigma_{\vartheta}\sigma'_0)$, where $\sigma'_0$ is the empty substitution; 

\item[(ii)] {if a node at level $i$ is labelled with $(\sigma'_i, \psi_Q\sigma_{\vartheta}\sigma'_i)$, then its $s$ successors, with $s >0$, are labelled with $\big(\sigma'_i\varrho^{q_{i+1}}_1, \psi_Q\sigma_{\vartheta}(\sigma'_i\varrho^{q_{i+1}}_1)\big),\ldots,\big(\sigma'_i\varrho^{q_{i+1}}_s, \psi_Q\sigma_\vartheta(\sigma'_i\varrho^{q_{i+1}}_s)\big)$, 
where $q_{i+1}$ is the $(i+1)$-st conjunct of $\psi_Q\sigma_{\vartheta}\sigma'_i$ and $\mathcal{S}_{q_{i+1}}=\{\varrho^{q_{i+1}}_1, \ldots, \varrho^{q_{i+1}}_s  \}$ is the collection of the substitutions $\varrho = \{v_1/o_1, \ldots, v_n/ o_n,  c_1/C_1, \ldots, c_m/ C_m,\\ r_1/R_1, \ldots, r_k/ R_k, p_1/P_1, \ldots, p_h/ P_h \}$, with $\{v_1, \ldots, v_n\} = \varz(q_{i+1})$,}{\unskip\parfillskip 0pt \par} $\{c_1,\ldots, c_m\} = \varu(q_{i+1})$, and $\{p_1,\ldots, p_h,r_1,\ldots, r_k\} = \vart(q_{i+1})$, such that $t=q_{i+1}\varrho$, for some  literal $t$ on $\vartheta'$. If $s = 0$, the node labelled with $(\sigma'_i, \psi_Q\sigma_{\vartheta}\sigma'_i)$ is a leaf node and, if $i = d$,  $\sigma_\vartheta\sigma'_i$ is added to $\Sigma'$. 
\end{itemize}

We are ready to define Procedure \textit{HOCQA}-$\shdlssx$. 

\smallskip

{\small
\begin{algorithmic}[1]
\Procedure{\textit{HOCQA}-$\shdlssx$}{$\psi_Q$,$\phi_\KB$};

\State $\Sigma'$ := $\emptyset$;

\State - let $\Phi_\KB$ be the expansion of $\phi_\KB$ (cf.\ (\ref{phikb}));

\State $\T_{\KB}$ := $\Phi_\KB$;
\While{$\T_{\KB}$ is not fulfilled}
\parState{- select a not fulfilled open branch $\vartheta$ of $\T_{\KB}$ and a not fulfilled formula\\ \hspace*{.2cm}$\beta_1 \vee \ldots \vee \beta_n$ in $\vartheta$;}
\If{$\seqnj$ is in $\vartheta$, for some $j \in \{1,\ldots,n\}$}
    \State - apply the E-Rule to $\beta_1 \vee \ldots \vee \beta_n$ and $\seqnj$ on $\vartheta$;
\Else 
    \parState{- let $B^{\overline{\beta}}$ be the collection of the formulae $\overline{\beta}_1,\ldots,\overline{\beta}_n$ present in $\vartheta$
and let\\ \hspace*{.15cm} $h$ be the lowest index such that $\overline{\beta}_h \notin B^{\overline{\beta}}$;}
    \parState{- apply the PB-rule to $\overline{\beta}_h$ on $\vartheta$;}
\EndIf;
\EndWhile;
\parWhile{$\T_\KB$ has open branches containing literals of type $x = y$, with distinct $x$ and $y$}
\State{- select such an open branch $\vartheta$ of $\T_\KB$;}

\State $\sigma_{\vartheta} := \epsilon$ (where $\epsilon$ is the empty substitution);

\State $\mathsf{Eq}_{\vartheta} := \{ \mbox{literals of type $x = y$, occurring in $\vartheta$}\}$;

\While{$\mathsf{Eq}_{\vartheta}$ contains $x = y$, with distinct $x$, $y$}

    \State - select a literal $x = y$ in $\mathsf{Eq}_{\vartheta}$, with distinct $x$, $y$;
    
    \State $z := \min_{<_{\vartheta}}(x,y)$;
    
    \State $\sigma_{\vartheta} := \sigma_{\vartheta} \cdot \{x/z, y/z\}$;
    
    \State $\mathsf{Eq}_{\vartheta} := \mathsf{Eq}_{\vartheta}\sigma_{\vartheta}$;
\EndWhile;

\State $\vartheta := \vartheta\sigma_{\vartheta}$;

\If{$\vartheta$ is open}
	\State - initialize $\mathcal{S}$ to the empty stack;

	\State - push $(\epsilon, \psi_Q\sigma_\vartheta)$ in $\mathcal{S}$;
	
		\While{$\mathcal{S}$ is not empty}
		\State - pop $(\sigma', \psi_Q\sigma_\vartheta\sigma')$ from $\mathcal{S}$;
		
		\If{$\psi_Q\sigma_\vartheta\sigma' \neq \lambda$}
			\State - let $q$ be the leftmost conjunct of $\psi_Q\sigma_\vartheta\sigma'$;
			
			\State $\psi_Q\sigma_\vartheta\sigma':= \psi_Q\sigma_\vartheta\sigma'$ deprived of $q$;
			\State $Lit^{M}_Q := \{ t \in \vartheta : t=q\rho$, for some substitution $\rho \}$;
			
			\While{$Lit^{M}_Q$ is not empty}
				\State - let $t \in Lit^{M}_Q$, $t=q\rho$;
				
				\State $Lit^{M}_Q := Lit^{M}_Q \setminus \{t\}$;
				
				\State - push $(\sigma'\rho, \psi_Q\sigma_\vartheta\sigma'\rho)$ in $\mathcal{S}$;
			\EndWhile;
		\Else
			\State $\Sigma'$ := $\Sigma' \cup \{\sigma_\vartheta\sigma'\}$;
		\EndIf;
	\EndWhile;

\EndIf;
\EndparWhile;

%
%
%
%
%
%
%
%

\State \Return $(\T_\KB, \Sigma')$;
\EndProcedure;
\end{algorithmic}
}

\normalsize

For each open branch $\vartheta$ of $\T_\KB$, Procedure \textit{HOCQA}-$\shdlssx$ computes the corresponding $\DT_{\vartheta}$ by constructing a stack of its nodes. Initially the stack contains the root node $(\epsilon,\psi_Q\sigma_\vartheta)$ of  $\DT_{\vartheta}$, as defined in condition (i). Then, iteratively, the following steps are executed. An element $(\sigma', \psi_Q\sigma_\vartheta\sigma')$ is popped out of the stack. If the last literal of the query $\psi_Q$ has not been reached, the successors of the current node are computed according to condition (ii) and inserted in the stack. Otherwise the current node must have the form $(\sigma',\lambda)$ and the substitution $\sigma_\vartheta\sigma'$  is inserted in $\Sigma'$.

%
Correctness of Procedure \textit{HOCQA}-$\shdlssx$ follows from Theorems \ref{teo:correctness} and \ref{teo:completeness}, which show that $\phi_\KB$ is satisfiable if and only if $\T_{\KB}$ is a non-closed \ke, and from Theorem \ref{teo:proc2corr}, which shows that the set $\Sigma'$ coincides with the HO answer set of $\psi_Q$ w.r.t. $\phi_\KB$. Theorems \ref{teo:correctness}, \ref{teo:completeness}, and \ref{teo:proc2corr} are stated below. 
In particular, Theorem \ref{teo:correctness}, requires the following technical lemmas. 
\begin{lemma}\label{lemma:invariant}
	Let $\vartheta$ be a branch of $\T_{\KB}$ selected at step $15$ of Procedure \textit{HOCQA}-$\shdlssx$($\psi_{Q}$,$\phi_{\KB}$), let $\sigma_{\vartheta}$ be the associated substitution constructed during the execution of the while-loop 18--23, and let $\M = (D,M)$ be a \flqsr-interpretation satisfying $\vartheta$. Then 
	\begin{equation}\label{eq:invariant}
	Mx = Mx\sigma_{\vartheta}, \mbox{ for every } x \in \mathsf{Var}_0(\vartheta),
	\end{equation}
	is an invariant of the while-loop 18--23. 
\end{lemma}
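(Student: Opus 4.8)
The plan is to prove the invariant by induction on the number of iterations of the while-loop at lines 18--23, showing that equation~(\ref{eq:invariant}) holds before the first iteration and is preserved by each iteration.

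\medskip

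\noindent\textbf{Base case.} Before the loop is entered, $\sigma_{\vartheta}$ is the empty substitution $\epsilon$ (set at line~17), so $Mx\sigma_{\vartheta} = Mx$ trivially for every $x \in \mathsf{Var}_0(\vartheta)$, and~(\ref{eq:invariant}) holds.

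\medskip

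\noindent\textbf{Inductive step.} Suppose~(\ref{eq:invariant}) holds at the start of an iteration, i.e. $Mx = Mx\sigma_{\vartheta}$ for all $x \in \mathsf{Var}_0(\vartheta)$, where $\sigma_{\vartheta}$ is the substitution accumulated so far. The iteration selects a literal $x = y$ in $\mathsf{Eq}_{\vartheta}$ with distinct $x, y$ (line~19), sets $z := \min_{<_{\vartheta}}(x,y)$ (line~20), updates $\sigma_{\vartheta}' := \sigma_{\vartheta} \cdot \{x/z, y/z\}$ (line~21), and then replaces $\mathsf{Eq}_{\vartheta}$ by $\mathsf{Eq}_{\vartheta}\sigma_{\vartheta}'$ (line~22). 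I must show $Mw = Mw\sigma_{\vartheta}'$ for every $w \in \mathsf{Var}_0(\vartheta)$. The key observation is that the literal $x = y$ occurs in $\mathsf{Eq}_{\vartheta}$, hence (after the earlier substitutions have been applied to the branch, or equivalently using the inductive hypothesis together with the fact that $\mathsf{Eq}_{\vartheta}$ is kept in sync with $\vartheta$ via line~22) the corresponding equality is satisfied by $\M$: since $\M \models \vartheta$ and $x = y$ is (the image under the accumulated substitution of) a literal of $\vartheta$, we have $Mx = My$. Because $z$ is one of $x, y$, the substitution $\{x/z, y/z\}$ maps both $x$ and $y$ to $z$ and fixes every other variable; since $Mx = My = Mz$, applying $\{x/z,y/z\}$ does not change the $M$-value of any variable, i.e. $Mw = M(w\{x/z,y/z\})$ for all $w$. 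Composing with the inductive hypothesis $Mw = Mw\sigma_{\vartheta}$ yields $Mw = Mw\sigma_{\vartheta}' = Mw(\sigma_{\vartheta}\cdot\{x/z,y/z\})$, which is~(\ref{eq:invariant}) for the updated substitution. Finally, replacing $\mathsf{Eq}_{\vartheta}$ by $\mathsf{Eq}_{\vartheta}\sigma_{\vartheta}'$ does not affect the statement, which concerns $\sigma_{\vartheta}$ and $M$ only.

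\medskip

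\noindent\textbf{Main obstacle.} The delicate point is keeping straight \emph{which} equality is guaranteed to be satisfied by $\M$ at the moment the literal $x = y$ is selected: the literals in $\mathsf{Eq}_{\vartheta}$ have been rewritten by the previously accumulated substitution (line~22), so strictly speaking $x = y \in \mathsf{Eq}_{\vartheta}$ need not be a literal of the original branch $\vartheta$. One resolves this by noting that, thanks to the inductive hypothesis $Mw = Mw\sigma_{\vartheta}$, satisfaction of $\vartheta$ by $\M$ transfers to $\vartheta\sigma_{\vartheta}$, and $\mathsf{Eq}_{\vartheta}$ (after the line~22 updates) consists precisely of the equality literals of $\vartheta\sigma_{\vartheta}$; hence $\M$ does satisfy $x = y$, giving $Mx = My$. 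The rest is the routine verification that substituting equals for equals preserves $M$-values, which I would carry out as above without further elaboration.
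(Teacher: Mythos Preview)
Your proposal is correct and follows essentially the same approach as the paper: induction on the number of iterations of the while-loop, with the key step being to establish $Mx = My$ for the selected literal $x=y$ by tracing it back to an original equality literal of $\vartheta$ via the inductive hypothesis. The paper organizes the inductive step as a case split on whether $w\sigma_{\vartheta}^{(i)}$ equals the non-minimal variable, whereas you argue more directly that the one-step substitution $\{x/z,y/z\}$ preserves $M$-values and then compose with the inductive hypothesis; the two presentations are equivalent.
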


\begin{proof}
We prove the thesis by induction on the number $i$ of iterations of the while loop 18--23 of the procedure \textit{HOCQA}-$\shdlssx$($\psi_{Q}$,$\phi_{\KB}$). For simplicity we indicate with $\sigma_{\vartheta}^{(i)}$ and with $Eq_{\sigma_{\vartheta}}^{(i)}$ the substitution $\sigma_{\vartheta}$ and the set $Eq_{\sigma_{\vartheta}}$calculated at iteration $i \geq 0$, respectively. 

If $i = 0$, $\sigma_{\vartheta}^{(0)}$ is the empty substitution $\epsilon$ and thus (\ref{eq:invariant}) trivially holds. 

Assume by inductive hypothesis that (\ref{eq:invariant}) holds at iteration $i \geq 0$. We want to prove that (\ref{eq:invariant}) holds at iteration $i+1$. 

At iteration $i +1$, $\sigma_{\vartheta}^{(i+1)} = \sigma_{\vartheta}^{(i)} \cdot \{x/z,y/z\}$, where $z = \min_{<_{\vartheta}}\{x,y\}$ and $x = y$ is a literal in $Eq_{\sigma_{\vartheta}}^{(i)}$, with distinct $x,y$. We assume, without loss of generality, that $z$ is the variable $x$ (an analogous proof can be carried out assuming that $z$ is the variable $y$). By inductive hypothesis $Mw = Mw\sigma_{\vartheta}^{(i)}$, for every $w \in \mathsf{Var}_0(\vartheta)$. If $w\sigma_{\vartheta}^{(i)} \in \mathsf{Var}_0(\vartheta)\setminus \{y\}$, plainly $w\sigma_{\vartheta}^{(i)}$ and $w\sigma_{\vartheta}^{(i+1)}$ coincide and thus 
$Mw\sigma_{\vartheta}^{(i)} = Mw\sigma_{\vartheta}^{(i+1)}$. Since $Mw = Mw\sigma_{\vartheta}^{(i)}$, it follows that $Mw = Mw\sigma_{\vartheta}^{(i+1)}$. 

If $w\sigma_{\vartheta}^{(i)}$ coincides with $y$ we reason as follows. At iteration $i+1$ variables $x,y$ are considered because the literal $x=y$ is selected from $Eq_{\sigma_{\vartheta}}^{(i)}$. If $x=y$ is a literal belonging to $\vartheta$, then $Mx = My$. Since $w\sigma_{\vartheta}^{(i)}$ coincides with $y$, $w\sigma_{\vartheta}^{(i+1)}$ coincides with $x$, $My = Mx$, and by inductive hypothesis 
$Mw = Mw\sigma_{\vartheta}^{(i)}$, it holds that $Mw = Mw\sigma_{\vartheta}^{(i+1)}$. If $x  = y$ is not a literal occurring in $\vartheta$, then $\vartheta$ must contain a literal $x' = y'$ such that, at iteration $i$, $x$ coincides with $x'\sigma_{\vartheta}^{(i)}$ and $y$ coincides with $y'\sigma_{\vartheta}^{(i)}$. Since $Mx' = My'$ and, by inductive hypothesis, $Mx' = Mx'\sigma_{\vartheta}^{(i)}$, and $My' = My'\sigma_{\vartheta}^{(i)}$, it holds that $Mx = My$, and thus, reasoning as above,  $Mw = Mw\sigma_{\vartheta}^{(i+1)}$. Since (\ref{eq:invariant}) holds at each iteration of the while loop, it is an invariant of the loop as we wished to prove.\qed
\end{proof}
\begin{lemma}\label{lemma:correctness}
	Let $\T_0,\ldots,\T_h$ be a sequence of \ke x such that $\T_0 = \Phi_{\KB}$, and $\T_{i+1}$ is obtained from $\T_i$ by applying either the rule of step 8, or the rule of step 10, or the  substitution of step 24 of Procedure \textit{HOCQA}-$\shdlssx$($\psi_{Q}$,$\phi_{\KB}$), for $i = 1,\ldots,h-1$. 
	If $\T_i$ is satisfied by a \flqsr-interpretation $\M$, then $\T_{i+1}$ is satisfied by $\M$ as well, for $i = 1,\ldots,h-1$.
\end{lemma}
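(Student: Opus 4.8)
The lemma asks to show that the three operations Procedure \textit{HOCQA}-$\shdlssx$ can apply to a \ke---the E-rule (step 8), the PB-rule (step 10), and the substitution $\sigma_\vartheta$ (step 24)---all preserve satisfiability of the whole tree by a given $\flqsr$-interpretation $\M$. Recall that $\M$ satisfies a \ke\ $\T_i$ precisely when it satisfies at least one branch of $\T_i$; so it suffices to fix a branch $\vartheta$ of $\T_i$ with $\M \models \vartheta$ and exhibit, for each rule, a branch of $\T_{i+1}$ still satisfied by $\M$. The plan is to do a case analysis on which of the three rules produces $\T_{i+1}$ from $\T_i$. In each case, if the rule is applied to a branch other than $\vartheta$, then $\vartheta$ itself survives unchanged in $\T_{i+1}$ and we are done immediately; so the only work is when the rule acts on the very branch $\vartheta$ that $\M$ satisfies.

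\textbf{The E-rule and PB-rule cases.} Suppose the E-rule is applied at step 8 to a disjunction $\beta_1 \vee \ldots \vee \beta_n$ on $\vartheta$ together with the set $\seqnj = \{\overline{\beta}_1,\ldots,\overline{\beta}_n\}\setminus\{\overline{\beta}_i\}$ present on $\vartheta$, yielding the child branch $\vartheta$ augmented with $\beta_i$. Since $\M \models \vartheta$, $\M$ satisfies the disjunction and also every $\overline{\beta}_\ell$ with $\ell \neq i$; by propositional reasoning $\M \models \beta_i$, hence $\M$ satisfies the extended branch, which is a branch of $\T_{i+1}$. For the PB-rule at step 10 applied to a literal $\overline{\beta}_h$ on $\vartheta$: $\T_{i+1}$ replaces $\vartheta$ by two branches, one extended with $\beta_h$ and one with $\overline{\beta}_h$. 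Since $\M$ assigns a truth value to the literal $\beta_h$, it satisfies exactly one of these two extensions, and that extension is a branch of $\T_{i+1}$ satisfied by $\M$. Both arguments are routine consequences of the soundness of the classical cut and the generalized disjunctive-syllogism rule.

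\textbf{The substitution case --- the main point.} The only genuinely delicate case is step 24, where $\vartheta$ is replaced by $\vartheta\sigma_\vartheta$, with $\sigma_\vartheta$ the substitution built by the inner while-loop 18--23. Here I would invoke Lemma~\ref{lemma:invariant}: since $\M \models \vartheta$, that lemma (applied with the \emph{final} value of $\sigma_\vartheta$, the invariant holding at loop exit) gives $Mx = Mx\sigma_\vartheta$ for every $x \in \mathsf{Var}_0(\vartheta)$. Now take any formula $X$ on $\vartheta$; it is a disjunction of level-$0$ literals of the forms $x=y$, $x \in X^1$, $\langle x,y\rangle \in X^3$ (and the only variables substituted by $\sigma_\vartheta$ are level-$0$ variables, so level-$1$ and level-$3$ variables are untouched). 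Because the interpretation of each such literal depends on the values $Mx$, $My$ only through the assignment $M$, and $Mx = M(x\sigma_\vartheta)$, $My = M(y\sigma_\vartheta)$, the literal and its $\sigma_\vartheta$-image have the same truth value under $\M$; hence $\M \models X$ iff $\M \models X\sigma_\vartheta$. Since $\M \models X$ for all $X$ on $\vartheta$, we get $\M \models X\sigma_\vartheta$ for all of them, i.e.\ $\M \models \vartheta\sigma_\vartheta$, a branch of $\T_{i+1}$. I expect this substitution case --- specifically the careful appeal to Lemma~\ref{lemma:invariant} and the observation that all affected literals are level-$0$ --- to be the crux; the E- and PB-rule cases are standard. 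One should also note explicitly at the start that the claimed property is really "for each $i$" and that the hypothesis "$\T_i$ satisfied by $\M$" is exactly "some branch of $\T_i$ satisfied by $\M$," so that the branch-wise analysis above is legitimate.
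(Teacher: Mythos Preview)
Your proposal is correct and follows essentially the same approach as the paper's own proof: both fix a branch $\vartheta$ of $\T_i$ satisfied by $\M$, observe that if the rule acts on a different branch then $\vartheta$ persists in $\T_{i+1}$, and otherwise do the same case analysis on the E-rule, the PB-rule, and the substitution of step~24, invoking Lemma~\ref{lemma:invariant} for the last case. Your treatment of the substitution case is in fact slightly more explicit than the paper's---you spell out why $Mx = M(x\sigma_\vartheta)$ on level-$0$ variables forces each literal and its $\sigma_\vartheta$-image to agree under $\M$---whereas the paper simply asserts that $\M \models \bar\vartheta$ together with the invariant yields $\M \models \bar\vartheta\sigma_{\bar\vartheta}$.
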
  
\begin{proof}
Let $\M=(D,M)$ be a \flqsr-interpretation satisfying $\T_i$. Then $\M$ satisfies a branch $\bar{\vartheta}$ of $\T_i$. In case the branch $\bar{\vartheta}$ is different from the branch selected at step 6, if the E-rule (step 8) or the PB-rule (10) is applied, or at step 3, if a substitution for handling equalities (step 14) is applied, $\bar{\vartheta}$ belongs to $\T_{i+1}$ and therefore $\T_{i+1}$ is satisfied by $\M$. 
In case $\bar{\vartheta}$ is the branch selected and modified to obtain $\T_{i+1}$, we have to consider the following distinct cases. 
\begin{itemize}
\item $\bar{\vartheta}$ has been selected at step 6 and thus it is an open branch not yet fulfilled. Then, if step 8 is executed, the E-rule is applied to a not fulfilled formula $\beta_1 \vee \ldots \vee \beta_n$ and to the set of formulae $\seqnj$ on the branch $\bar{\vartheta}$ generating the new branch $\bar{\vartheta'} := \bar{\vartheta} ; \beta_i$. Plainly, if $\M \models \bar{\vartheta}$, $\M \models \beta_1 \vee \ldots \vee \beta_n$, $\M \models \seqnj$ and, as a consequence, $\M \models \beta_i$. Thus $\M \models \bar{\vartheta'}$ and finally, $\M$ satisfies $\T_{i+1}$. If step 10 is performed, the PB-rule is applied on $\bar{\vartheta}$ originating the  branches  (belonging to $\T_{i+1}$) $\bar{\vartheta'} := \bar{\vartheta} ; \overline{\beta}_h$ and $\bar{\vartheta''} := \bar{\vartheta} ; \beta_h$. Since either $\M \models \beta_h$ or  $\M \models \overline{\beta}_h$, it holds that either $\M \models \bar{\vartheta'}$ or $\M \models \bar{\vartheta''}$. Thus $\M$ satisfies $\T_{i+1}$, as we wished to prove. 
\item $\bar{\vartheta}$ has been selected at step 14 and thus it is an open and fulfilled branch not yet complete. Once step $24$ is executed the new branch $\bar{\vartheta} \sigma_{\bar{\vartheta}}$ is generated. Since $\M \models \bar{\vartheta}$ and, by Lemma \ref{lemma:invariant}, $Mx = Mx\sigma_{\bar{\vartheta}}$, for every $x \in \mathsf{Var}_0(\bar{\vartheta})$, it holds that $\M \models \bar{\vartheta} \sigma_{\bar{\vartheta}}$ and that $\M$ satisfies $\T_{i+1}$.  Thus the thesis follows.\qed 
\end{itemize} 
\end{proof}
%

%

Then we have:

\begin{theorem}\label{teo:correctness}
	If $\phi_{\KB}$ is satisfiable, then $\T_{\KB}$ is not closed.
\end{theorem}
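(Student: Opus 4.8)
The plan is to argue that satisfiability of $\phi_{\KB}$ is preserved through the whole construction of $\T_{\KB}$, so that the final tableau has at least one satisfiable (hence non-closed) branch. First I would observe that $\Phi_{\KB}$, the expansion of $\phi_{\KB}$ used to initialize $\T_{\KB}$ at step~4, is satisfied by exactly the same \flqsr-interpretations as $\phi_{\KB}$: each $Exp(S_i)$ is a conjunction of instances $S_i\{z^i_1/x_{a_1},\ldots\}$ of a purely universal conjunct $S_i$ of $\phi_{\KB}$ over the finitely many level-$0$ variables occurring free in $\phi_{\KB}$, and these instances are logical consequences of $S_i$; conversely $\Phi_{\KB}$ contains all the quantifier-free conjuncts $F_1,\ldots,F_k$ of $\phi_{\KB}$ and, thanks to restriction~1 of \flqsr\ (the ``linked variables'' condition), the instantiation over $\varz(\phi_{\KB})$ loses no information, so any model of $\Phi_{\KB}$ extends to a model of $\phi_{\KB}$. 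Thus from a model $\M$ of $\phi_{\KB}$ we get $\M \models \T_{\KB}$ at the initial stage, since $\T_{\KB}$ is then the single branch listing the elements of $\Phi_{\KB}$.

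Next I would invoke the already-proved Lemma~\ref{lemma:correctness}: every tableau $\T_{i+1}$ in the sequence produced by Procedure \textit{HOCQA}-$\shdlssx$ — whether obtained by the E-rule of step~8, the PB-rule of step~10, or the equality-substitution of step~24 — is satisfied by $\M$ whenever $\T_i$ is. By a trivial induction on the length of this sequence, starting from the base case established in the previous paragraph, the final tableau $\T_{\KB}$ is satisfied by $\M$. By definition of ``$\M$ satisfies a \ke'', this means $\M$ satisfies some branch $\vartheta$ of $\T_{\KB}$, i.e. $\M \models X$ for every formula $X$ on $\vartheta$.

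Finally I would show that a satisfiable branch cannot be closed: if $\vartheta$ contained both $A$ and $\neg A$ for some formula $A$, then $\M \models A$ and $\M \models \neg A$, which is impossible; and if $\vartheta$ contained a literal $\neg(x=x)$, then $\M \models \neg(x=x)$, contradicting the semantics of equality. Hence $\vartheta$ is open, so $\T_{\KB}$ has an open branch and is therefore not closed.

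The only genuine subtlety — and the place I would spend the most care — is the first paragraph: justifying that passing from $\phi_{\KB}$ to its expansion $\Phi_{\KB}$ preserves (un)satisfiability in both directions. The ``$\Rightarrow$'' direction is easy (instances follow from universally quantified formulae), but the ``$\Leftarrow$'' direction needs the \flqsr-specific fact that it suffices to instantiate the universal quantifiers over the free level-$0$ variables already present, which is exactly what restriction~1 on \flqsr-formulae was designed to secure; one should point to \cite{CanNic2013} and to the analogous argument in \cite{CanLonNicSanRR2015} here rather than reprove it. Everything after that is a routine induction plus the elementary observation about closed branches. (Strictly, for the statement as phrased — ``if $\phi_{\KB}$ is satisfiable then $\T_{\KB}$ is not closed'' — only the ``$\Rightarrow$'' direction of the expansion equivalence is actually needed, so even this obstacle is mild; the converse is what matters for the completeness Theorem~\ref{teo:completeness}.)
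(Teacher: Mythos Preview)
Your proposal is correct and follows essentially the same route as the paper: pass from satisfiability of $\phi_{\KB}$ to satisfiability of its expansion $\Phi_{\KB}$, invoke Lemma~\ref{lemma:correctness} inductively to propagate the model $\M$ through every stage of the construction of $\T_{\KB}$, and then observe that a branch satisfied by $\M$ cannot be closed. The paper frames the last step as a proof by contradiction and is terser about the expansion (it simply asserts that $\Phi_{\KB}$ is satisfiable), whereas you argue directly and justify the expansion step more carefully---and you are right that only the easy direction of the $\phi_{\KB}\leftrightarrow\Phi_{\KB}$ equivalence is needed here.
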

\begin{proof}
Let us assume by contradiction that $\T_{\KB}$ is closed. Since $\Phi_{\KB}$ is satisfiable, there exists a $\flqsr$-interpretation $\M$ satisfying every formula of $\Phi_{\KB}$. Thanks to Lemma \ref{lemma:correctness}, any \ke\space for $\Phi_{\KB}$ obtained by applying either step 8, or step 10, or step 24 of the procedure \textit{HOCQA}-$\shdlssx$, is satisfied by $\M$. Thus $\T_{\KB}$ is satisfied by $\M$ as well. In particular, there exists a branch $\vartheta_c$ of $\T_{\KB}$ satisfied by $\M$. Since $\T_{\KB}$ is closed, by the absurd hypothesis,  the branch $\vartheta_c$ is closed as well and thus, by definition, it contains either both $A$ and $\neg A$, for some formula $A$, or a literal of type $\neg (x = x)$. $\vartheta$ is satisfied by $\M$ and thus, either $\M \models A$ and $\M \models \neg A$ or $\M \models \neg (x = x)$. Absurd. Thus, we have to admit that the \ke\space $\T_{\KB}$ is not closed.\qed  
\end{proof}

\begin{theorem}\label{teo:completeness}
	If $\T_{\KB}$ is not closed, then  $\phi_{\KB}$ is satisfiable.
\end{theorem}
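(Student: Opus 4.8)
The plan is to establish the converse of Theorem~\ref{teo:correctness} by a direct, Herbrand-style model construction: from a non-closed \ke\ $\T_{\KB}$ we read off a \flqsr-interpretation that satisfies $\phi_{\KB}$. First I would fix a suitable open branch. Since $\T_{\KB}$ is the tableau returned by Procedure \textit{HOCQA}-$\shdlssx$, it is fulfilled and all of its branches are complete; hence, as $\T_{\KB}$ is not closed, there is a branch $\vartheta$ of $\T_{\KB}$ that is open, fulfilled, and contains no literal $x=y$ with $x$ and $y$ distinct.

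Next I would build a \flqsr-interpretation $\M=(D,M)$ out of $\vartheta$. Put $D\defAs\mathsf{Var}_0(\vartheta)$ and $Mx\defAs x$ for each $x\in\mathsf{Var}_0(\vartheta)$ (this is well defined precisely because $\vartheta$ has no equality between distinct variables), extending $M$ to the remaining level-$0$ variables of $\phi_{\KB}$ according to the identifications recorded by $\sigma_\vartheta$. For a level-$1$ variable $X^1$ set $MX^1\defAs\{x\in\mathsf{Var}_0(\vartheta): (x\in X^1)\text{ occurs in }\vartheta\}$, and for a level-$3$ variable $X^3$ set $MX^3\defAs\{\{\{x\},\{x,y\}\} : (\langle x,y\rangle\in X^3)\text{ occurs in }\vartheta\}$; any variable (and the unused level-$2$ sort) not mentioned in $\vartheta$ is assigned an arbitrary value of the correct sort, chosen so as to meet the auxiliary constraints $\xi_1$--$\xi_{12}$ (in particular $D\neq\emptyset$, which $\phi_{\KB}$ forces). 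One checks that $MX^1\in\pow(D)$ and $MX^3\in\pow(\pow(\pow(D)))$, so that $\M$ is genuinely a \flqs-interpretation, hence a \flqsr-one.

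Then I would verify $\M\models\vartheta$. Every formula occurring on $\vartheta$ is a level-$0$ quantifier-free literal or a disjunction of such literals, since the members of $\Phi_{\KB}$ are clauses of that kind and the E-Rule and PB-Rule only append level-$0$ literals. A positive literal present on $\vartheta$ (of type $x\in X^1$, $\langle x,y\rangle\in X^3$, or $x=x$) holds in $\M$ by construction; a negative literal present on $\vartheta$ (of type $\neg(x\in X^1)$, $\neg(\langle x,y\rangle\in X^3)$, or $\neg(x=y)$ with $x\neq y$) holds in $\M$ because $\vartheta$ is open, so its complement is absent, distinct variables denote distinct domain elements, and $\neg(x=x)$ cannot occur; finally a disjunction $\beta_1\vee\ldots\vee\beta_n$ on $\vartheta$ holds in $\M$ because $\vartheta$ is fulfilled, so one of its disjuncts occurs on $\vartheta$ and is satisfied. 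In particular $\M$ satisfies every conjunct of the expansion $\Phi_{\KB}$ (cf.\ (\ref{phikb})). It remains to lift this to $\M\models\phi_{\KB}$: the quantifier-free conjuncts $F_1,\dots,F_k$ of $\phi_{\KB}$ are members of $\Phi_{\KB}$, hence satisfied; and for a purely universal conjunct $S_i=(\forall z_1^i)\ldots(\forall z_{n_i}^i)\chi_i$ of $\phi_{\KB}$ (these are the only quantified conjuncts, since $\theta$ and the $\xi_j$'s quantify only over level-$0$ variables) and arbitrary $d_1,\dots,d_{n_i}\in D$, each $d_\ell$ is itself a level-$0$ variable in $\varz(\phi_{\KB})$, so the instance $\chi_i\{z_1^i/d_1,\dots,z^i_{n_i}/d_{n_i}\}$ is, after $\sigma_\vartheta$, one of the conjuncts of $Exp(S_i)$ lying on $\vartheta$, hence satisfied by $\M$; since $Mx=x$ on $D$, this says exactly that $\M$ satisfies $\chi_i$ under the assignment $z_\ell^i\mapsto d_\ell$, and as the $d_\ell$ were arbitrary, $\M\models S_i$. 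Therefore $\M\models\phi_{\KB}$, which is the claim.

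The step I expect to be the main obstacle is this last lifting: one must argue carefully that the finite instantiation performed by $Exp(S_i)$ genuinely exhausts all assignments of the quantified variables to $D$ — which hinges on $D=\mathsf{Var}_0(\vartheta)\subseteq\varz(\phi_{\KB})$ together with the fact that $Exp$ ranges over all the relevant tuples of level-$0$ variables of $\phi_{\KB}$, including those with repetitions — and that applying $\sigma_\vartheta$ to $\Phi_{\KB}$ leaves all of these instances present on $\vartheta$, merely renaming their variables in a way compatible with $M$. A secondary point requiring care is that the arbitrary completion of $M$ on the sorts and variables left unconstrained by $\vartheta$ can always be made to respect the sort-level conditions encoded by $\xi_1$--$\xi_{12}$; this is routine, since those constraints were designed for exactly this construction (and mirror the $\psi_1$--$\psi_{12}$ used in the proof of Theorem~1 in \cite{CanLonNicSanRR2015}).
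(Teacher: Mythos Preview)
Your proposal is correct and follows essentially the same Herbrand-style construction as the paper: take the level-$0$ variables of an open complete branch as the domain and read off the interpretations of $X^1$ and $X^3$ from the positive literals on the branch (the paper works with the pre-substitution branch and bakes $\sigma_\vartheta$ into $M$, whereas you work with the post-substitution branch directly, but this is only bookkeeping). Two minor remarks: the paper actually defers your explicit lifting step from $\Phi_{\KB}$ to $\phi_{\KB}$ to a one-line corollary after the proof, and your worry about extending $M$ so as to respect $\xi_1$--$\xi_{12}$ is unnecessary, since those constraints are already conjuncts of $\phi_{\KB}$, hence their instances lie on the branch and are satisfied by construction.
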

\begin{proof}
\emph{Proof}. Since $\T_{\KB}$ is not closed, there exists a branch $\vartheta'$ of $\T_{\KB}$ which is open and complete. 
The branch $\vartheta'$ is obtained during the execution of the procedure \textit{HOCQA}-$\shdlssx$ from an open fulfilled branch $\vartheta$ by applying to $\vartheta$ the substitution $\sigma_{\vartheta}$ constructed during the execution of step 14 of the procedure. Thus, $\vartheta' = \vartheta\sigma_{\vartheta}$. Since each formula of $\Phi_{\KB}$ occurs in $\vartheta$, showing that $\vartheta$ is satisfiable is enough to prove that $\Phi_{\KB}$ is satisfiable. 

Let us construct a \flqsr-interpretation $\M_{\vartheta}=(D_{\vartheta},M_{\vartheta})$ satisfying every formula $X$ occurring in $\vartheta$ and thus $\Phi_{\KB}$. $\M_{\vartheta}=(D_{\vartheta},M_{\vartheta})$ is defined as follows. 
\begin{itemize}
\item $D_{\vartheta} \defAs \{x\sigma_{\vartheta} : x \in \mathsf{Var}_0(\vartheta) \}$; 
\item $M_{\vartheta} x \defAs x\sigma_{\vartheta}$, $x \in \mathsf{Var}_0(\vartheta)$; 
\item $M_{\vartheta} X^1 \defAs \{x\sigma_{\vartheta} : x \in X^1 \mbox{ occurs in } \vartheta\}$, $X^1 \in \mathsf{Var}_1(\vartheta)$;  
\item $M_{\vartheta} X^3 \defAs \{\langle x\sigma_{\vartheta}, y\sigma_{\vartheta} \rangle : \langle x, y\rangle \in X^3  \mbox{ occurs in } \vartheta \}$, $X^3 \in \mathsf{Var}_3(\vartheta)$. 
\end{itemize}


In what follows we show that $\M_{\vartheta}$ satisfies each formula in $\vartheta$. Our proof is carried out by induction on the structure of formulae and cases distinction. 
Let us consider, at first, a literal $x = y$ occurring in $\vartheta$. By the construction of $\sigma_{\vartheta}$ described in the procedure, $x\sigma_{\vartheta}$ and $y\sigma_{\vartheta}$ have to coincide. Thus $M_{\vartheta}x = x\sigma_{\vartheta} = y\sigma_{\vartheta} = M_{\vartheta} y$ and then $\M_{\vartheta} \models x=y$.

Next we consider a literal $\neg (z = w)$ occurring in $\vartheta$. If $z\sigma_{\vartheta}$ and $w\sigma_{\vartheta}$ coincide, namely they are the same variable, then the branch $\vartheta' = \vartheta\sigma_{\vartheta}$ must be a closed branch against our hypothesis. Thus $z\sigma_{\vartheta}$ and $w\sigma_{\vartheta}$ are distinct variables and therefore $M_{\vartheta} z= z\sigma_{\vartheta} \neq w\sigma_{\vartheta} = M_{\vartheta}w$, then $\M_{\vartheta} \not\models z = w$ and finally $\M_{\vartheta} \models \neg(z = w)$, as we wished to prove. 
  
Let $x \in X^1$ be a literal occurring in $\vartheta$. By the definition of $M_{\vartheta}$, $x\sigma_{\vartheta} \in M_{\vartheta}X^1$, namely $M_{\vartheta} x \in M_{\vartheta}X^1$ and thus $\M_{\vartheta} \models x \in X^1$ as desired. 
If $\neg(y \in X^1)$ occurs in $\vartheta$, then $y\sigma_{\vartheta}\notin M_{\vartheta}X^1$. Assume, by contradiction that $y\sigma_{\vartheta}\in M_{\vartheta}X^1$. Then there is a literal $z \in X^1$ in $\vartheta$ such that $z\sigma_{\vartheta}$ and $y\sigma_{\vartheta}$ coincide. In this case the branch $\vartheta'$, obtained from $\vartheta$ applying the substitution  $\sigma_{\vartheta}$ would be closed, contradicting the hypothesis. Thus $y\sigma_{\vartheta}\notin M_{\vartheta}X^1$ implies that $M_{\vartheta}y \notin M_{\vartheta}X^1$, that  $\M_{\vartheta} \not\models y \in X^1$, and finally that $\M_{\vartheta} \models \neg(y \in X^1)$. 

If $\langle x,y\rangle \in X^3$ is a literal on $\vartheta$, then by definition of $M_{\vartheta}$, $\langle x\sigma_{\vartheta}, y\sigma_{\vartheta}\rangle \in M_{\vartheta}X^3$, that is $\langle M_{\vartheta}x, M_{\vartheta}y\rangle \in M_{\vartheta}X^3$, and thus 
$\M_{\vartheta} \models \langle x,y\rangle \in X^3$. 

Let $\neg(\langle z,w\rangle \in X^3)$ be a literal occurring on $\vartheta$. Assume that $\langle z\sigma_{\vartheta},w\sigma_{\vartheta}\rangle \in M_{\vartheta}X^3$. Then a literal $\langle z',w'\rangle \in X^3$ occurs in $\vartheta$ such that $z\sigma_{\vartheta}$ coincides with $z'\sigma_{\vartheta}$ and that  $w\sigma_{\vartheta}$ coincides with $w'\sigma_{\vartheta}$. But then the branch $\vartheta' = \vartheta\sigma_{\vartheta}$ would be closed contradicting the hypothesis. Thus we have to admit that $\langle z\sigma_{\vartheta},w\sigma_{\vartheta}\rangle \notin M_{\vartheta}X^3$, that is $\langle M_{\vartheta}z,M_{\vartheta}w\rangle \notin M_{\vartheta}X^3$. Thus $\M_{\vartheta} \not\models \langle x,y\rangle \in X^3$ and finally $\M_{\vartheta} \models \neg(\langle x,y\rangle \in X^3)$. 

Let $\beta = \beta_1 \vee \ldots \vee \beta_k$ be a disjunction of literals in $\vartheta$. Since $\vartheta$ is fulfilled, $\beta$ is fulfilled too and, therefore, $\vartheta$ contains a disjunct $\beta_i$, for some $i \in \{1,\ldots,k\}$ of $\beta$. By inductive hypothesis $\M_{\vartheta} \models \beta_i$ and thus $\M_{\vartheta} \models \beta$. 

We have shown that $\M_{\vartheta}$ satisfies each formula in $\vartheta$ and, in particular the formulae in $\Phi_{\KB}$. It turns out that $\Phi_{\KB}$ is satisfiable as we wished to prove. \qed
\end{proof}


It is easy to check that the $\flqsr$-interpretation $\M_{\vartheta}$ defined in Theorem \ref{teo:completeness} satisfies $\phi_{\KB}$, a collection of $\flqsr$-purely universal formulae and of $\flqsr$-quantifier free atomic formulae corresponding to a $\shdlssx$-knowledge base $\KB$ and, therefore, that the following corollary holds. 

\begin{corollary}
	If $\T_{\KB}$ is not closed, then $\phi_{\KB}$ is satisfiable.
\end{corollary}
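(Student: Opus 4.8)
The plan is to reuse the very interpretation built in the proof of Theorem~\ref{teo:completeness} and to verify directly that it is a model of $\phi_{\KB}$ (and not merely of its expansion $\Phi_{\KB}$). So, assuming $\T_{\KB}$ is not closed, I would fix an open complete branch $\vartheta' = \vartheta\sigma_{\vartheta}$ of $\T_{\KB}$, with $\vartheta$ open and fulfilled, and take the $\flqsr$-interpretation $\M_{\vartheta} = (D_{\vartheta},M_{\vartheta})$ defined there; by Theorem~\ref{teo:completeness} we already know that $\M_{\vartheta} \models X$ for every formula $X$ occurring in $\vartheta$, and in particular $\M_{\vartheta} \models \Phi_{\KB}$.

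Next I would split $\phi_{\KB}$ (with universal quantifiers pushed inward, as assumed) into its quantifier-free atomic conjuncts $F_{1},\dots,F_{k}$ and its purely universal conjuncts $S_{1},\dots,S_{m}$, with $S_{i} \defAs (\forall z^{i}_{1})\dots(\forall z^{i}_{n_i})\chi_{i}$. Since $\Phi_{\KB} = \{F_{1},\dots,F_{k}\} \cup \bigcup_{i=1}^{m} Exp(S_{i})$ (cf.\ (\ref{phikb})) and $\M_{\vartheta}\models\Phi_{\KB}$, the conjuncts $F_{j}$ are satisfied for free, and it remains to promote satisfaction of each $Exp(S_{i})$ to satisfaction of $S_{i}$ itself. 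The key observation is that none of the expansion rules of the \ke\ (the E-rule, the PB-rule, and the equality substitution of step~24) introduces new variables of sort~$0$; hence $\mathsf{Var}_{0}(\vartheta) \subseteq \varz(\phi_{\KB})$, and, by the definition of $M_{\vartheta}$, $D_{\vartheta} = \{\,M_{\vartheta}x : x \in \mathsf{Var}_{0}(\vartheta)\,\}$, so every element of $D_{\vartheta}$ is of the form $M_{\vartheta}x_{a}$ for some $x_{a}\in\varz(\phi_{\KB})$.

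Finally, to check $\M_{\vartheta}\models S_{i}$ I would take an arbitrary tuple $d_{1},\dots,d_{n_i}\in D_{\vartheta}$, pick $x_{a_{1}},\dots,x_{a_{n_i}}\in\varz(\phi_{\KB})$ with $M_{\vartheta}x_{a_{\ell}} = d_{\ell}$, and note that the corresponding instantiation $\chi_{i}\{z^{i}_{1}/x_{a_{1}},\dots,z^{i}_{n_i}/x_{a_{n_i}}\}$ is one of the conjuncts of $Exp(S_{i})$, hence occurs in $\vartheta$ and is satisfied by $\M_{\vartheta}$; by the standard agreement between a substitution and the corresponding reassignment of the bound variables, this gives that $\chi_{i}$ holds in $\M_{\vartheta}$ under the assignment $z^{i}_{\ell}\mapsto d_{\ell}$. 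Since the tuple was arbitrary, $\M_{\vartheta}\models S_{i}$, and therefore $\M_{\vartheta}\models\phi_{\KB}$, which is exactly what is claimed. The only delicate point is the bookkeeping in the previous paragraph — arguing that the finite expansion $Exp(S_{i})$, ranging over the level-$0$ variables of $\phi_{\KB}$, really exhausts all $n_i$-tuples of elements of $D_{\vartheta}$ (including repetitions and the identifications induced by $\sigma_{\vartheta}$), so that it faithfully simulates genuine universal quantification over $D_{\vartheta}$; everything else is routine.
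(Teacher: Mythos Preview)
Your proposal is correct and follows exactly the approach the paper indicates: take the interpretation $\M_{\vartheta}$ constructed in Theorem~\ref{teo:completeness} and verify that it satisfies $\phi_{\KB}$ itself, not just its expansion $\Phi_{\KB}$. In fact, the paper's justification for the corollary is only the single sentence preceding it (``It is easy to check that the $\flqsr$-interpretation $\M_{\vartheta}$ \dots\ satisfies $\phi_{\KB}$''); you supply precisely the details the paper omits --- namely that no new level-$0$ variables are introduced along a branch, so $D_{\vartheta}$ is covered by $\{M_{\vartheta}x_{a}: x_{a}\in\varz(\phi_{\KB})\}$ and the finite instantiation $Exp(S_{i})$ suffices to witness the universal quantifiers in $S_{i}$.
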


In what follows, we state also a technical lemma which is needed in the proof of Theorem \ref{teo:proc2corr}. 

\begin{lemma}\label{lemma:proc2}
	Let $\psi_Q \defAs q_1 \wedge \ldots \wedge q_d$ be a HO $\flqsr$-conjunctive query, let $(\T_\KB, \Sigma')$ be the output of \textit{HOCQA}-$\shdlssx$($\psi_Q$,$\phi_\KB$), and let $\vartheta$ be an open and complete branch of $\T_\KB$. Then, for any substitution $\sigma$, we have 
	$$\sigma \in \Sigma' \iff \{ q_1 \sigma, \ldots, q_d\sigma \} \subseteq \vartheta\,.$$
\end{lemma}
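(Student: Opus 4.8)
The plan is to prove both implications by induction, essentially unfolding how Procedure \textit{HOCQA}-$\shdlssx$ builds the decision tree $\DT_{\vartheta}$ for the fixed open and complete branch $\vartheta$. The key observation is that the inner \textbf{while}-loop (lines 28--40) is a depth-first traversal of $\DT_{\vartheta}$: the stack always contains nodes $(\sigma',\psi_Q\sigma_\vartheta\sigma')$ where $\sigma'$ has been obtained by composing substitutions $\rho$, one for each conjunct already consumed, and each such $\rho$ was chosen so that $q_i\sigma_\vartheta\sigma'_{i-1}\rho$ equals some literal $t$ actually occurring on $\vartheta$ (the test $t \in Lit^M_Q$, $t = q\rho$, at line~33). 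A substitution is added to $\Sigma'$ exactly when a leaf at depth $d$ is reached, i.e. when all $d$ conjuncts have been consumed in this way.

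For the direction ($\Leftarrow$), suppose $\{q_1\sigma,\ldots,q_d\sigma\}\subseteq\vartheta$. First I would argue that $\sigma$ factors as $\sigma_\vartheta\sigma'$ for the branch's equality-substitution $\sigma_\vartheta$ together with a residual $\sigma'$ touching only the query variables (this uses that $\vartheta = \vartheta\sigma_\vartheta$ already, so the literals $q_i\sigma$ already present on $\vartheta$ are fixed by $\sigma_\vartheta$; here one also uses that the query variables in $\varind,\varcon,\varar,\varcr$ are disjoint from everything in $\KB$, so $\sigma_\vartheta$ does not act on them). Then I induct on $i = 0,\ldots,d$ to show that the node $(\sigma'_i,\psi_Q\sigma_\vartheta\sigma'_i)$ with $\sigma'_i$ the restriction of $\sigma'$ to the variables of $q_1,\ldots,q_i$ is actually generated by the procedure and pushed on the stack: the base case $i=0$ is the root $(\epsilon,\psi_Q\sigma_\vartheta)$ pushed at line~27; for the step, since $q_{i+1}\sigma_\vartheta\sigma' = q_{i+1}\sigma \in \vartheta$, the relevant matching literal $t$ lies in $Lit^M_Q$ at that stage, so the successor node corresponding to the restriction $\sigma'_{i+1}$ is among those pushed at line~38. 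At $i = d$ the query is exhausted ($\psi_Q\sigma_\vartheta\sigma' = \lambda$), so line~40 fires and $\sigma_\vartheta\sigma' = \sigma \in \Sigma'$.

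For ($\Rightarrow$), suppose $\sigma \in \Sigma'$. Then $\sigma$ was added at some execution of line~40, meaning the depth-first traversal reached a leaf $(\sigma',\lambda)$ with $\sigma = \sigma_\vartheta\sigma'$, after popping a chain of nodes $(\sigma'_0,\cdot),(\sigma'_1,\cdot),\ldots,(\sigma'_d,\cdot)$ with $\sigma'_0=\epsilon$ and $\sigma'_{j} = \sigma'_{j-1}\rho_j$, where each $\rho_j$ was chosen at line~36 precisely because some $t_j \in \vartheta$ satisfies $t_j = q_j\sigma_\vartheta\sigma'_{j-1}\rho_j = q_j\sigma_\vartheta\sigma'_j$. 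Since later conjuncts' substitutions do not reintroduce the variables of earlier conjuncts (the universally quantified/query variables are pairwise distinct and the $\rho_j$ only bind fresh ones), $q_j\sigma_\vartheta\sigma'_j = q_j\sigma_\vartheta\sigma'_d = q_j\sigma$ for each $j$, so $q_j\sigma = t_j \in \vartheta$, giving $\{q_1\sigma,\ldots,q_d\sigma\}\subseteq\vartheta$.

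The main obstacle I anticipate is the bookkeeping around the two composed substitutions: one must be careful that $\sigma_\vartheta$ (built from equalities on $\vartheta$) commutes appropriately with the query substitutions $\rho_j$, that each $\rho_j$ binds only variables not yet bound, and hence that the ``partial'' matches $q_j\sigma_\vartheta\sigma'_j$ are stable under the later extensions so they agree with $q_j\sigma$. Making the invariant ``$\sigma'_j$ binds exactly the variables of $q_1,\ldots,q_j$ and $q_\ell\sigma_\vartheta\sigma'_j \in \vartheta$ for all $\ell\le j$'' precise and carrying it through the induction is where the real work lies; everything else is a direct reading of the pseudocode against conditions (i)--(ii) defining $\DT_{\vartheta'}$.
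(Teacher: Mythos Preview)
Your proposal is correct and follows essentially the same approach as the paper: both directions are proved by tracing a root-to-leaf path in the decision tree $\DT_{\vartheta'}$, with each edge corresponding to matching one conjunct $q_i$ against a literal on the branch. The paper's argument is slightly terser---where you carry the invariant that each $\rho_j$ binds only fresh variables so that $q_j\sigma_\vartheta\sigma'_j = q_j\sigma_\vartheta\sigma'_d$, the paper simply observes that $q_{i+1}\sigma_\vartheta\rho^{(1)}\cdots\rho^{(i+1)}$ is already ground (it equals some $t\in\vartheta'$) and is therefore unaffected by later substitutions; your explicit bookkeeping around the factoring $\sigma=\sigma_\vartheta\sigma'$ is a point the paper glosses over.
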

\begin{proof} If $\sigma' \in \Sigma'$, then $\sigma'=\sigma_\vartheta\sigma'_1$ and the decision tree $\DT_{\vartheta'}$ contains a branch $\eta$ of length $d+1$ having as leaf $(\sigma'_1, \lambda)$. Specifically, the branch $\eta$ is constituted by the nodes 

 $(\epsilon, q_1\sigma_\vartheta \wedge \ldots \wedge q_d\sigma_\vartheta )$, $( \rho^{(1)}, q_2\sigma_\vartheta\rho^{(1)} \wedge \ldots \wedge q_d\sigma_\vartheta\rho^{(1)} )$, $\ldots$, $( \rho^{(1)} \ldots \rho^{(d)}, \lambda)$,

 \noindent and hence $\sigma'= \sigma_\vartheta \rho^{(1)} \ldots \rho^{(d)}$. 
 
\noindent Consider the node

 $ (\rho^{(1)} \ldots \rho^{(i+1)}, q_{i+2}\sigma_\vartheta \rho^{(1)} \ldots \rho^{(i+1)} \wedge \ldots \wedge q_d\sigma_\vartheta \rho^{(1)} \ldots \rho^{(i+1)})$ 
 
 \noindent constructed from the father node 
 
 $(\rho^{(1)} \ldots \rho^{(i+1)}, q_{i+1}\sigma_\vartheta \rho^{(1)} \ldots \rho^{(i)} \wedge \ldots \wedge q_d\sigma_\vartheta \rho^{(1)} \ldots \rho^{(i)})$ 
 
 \noindent putting $q_{i+1}\sigma_\vartheta \rho^{(1)} \ldots \rho^{(i+1)} = t$, for some $t \in \vartheta'$. Since $q_{i+1}\sigma_\vartheta \rho^{(1)} \ldots \rho^{(i+1)}$ is a ground literal,  $q_{i+1}\sigma_\vartheta \rho^{(1)} \ldots \rho^{(i+1)}$ coincides with $q_{i+1}\sigma'$, then $q_{i+1} \sigma'=t$, and hence $q_{i+1}\sigma' \in \vartheta'$. Given the generality of $i=0, \ldots, d-1$, $\{ q_1\sigma', \ldots, q_d \sigma' \} \subseteq \vartheta'$ as we wished to prove.

 We now prove the second part of the lemma. We show that the decision tree  $\DT_{\vartheta'}$ constructed by Procedure \textit{HOCQA}-$\shdlssx$($\psi_{Q}$,$\phi_{\KB}$) has a branch $\eta$ of length $d+1$ having as leaf the node $(\sigma'_1,\lambda)$, with $\sigma_\vartheta\sigma'_1=\sigma' \in \Sigma'$. Since by hypothesis  $\vartheta'= \vartheta\sigma_\vartheta$, the root of the decision tree $\DT_{\vartheta'}$ is the node $(\epsilon, q_1\sigma_\vartheta \wedge \ldots \wedge q_d\sigma_\vartheta )$. At step $i$, the procedure selects a literal $q^{(i)}$, namely $q_{i}\sigma_{\vartheta}\rho^{(1)}\ldots\rho^{(i-1)}$, and finds a substitution $\rho^{(i)}$ such that $q_{i}\sigma_{\vartheta}\rho^{(1)}\ldots\rho^{(i)}$ coincides with $q_{i}\sigma'$. Then, the procedure constructs the node 
 
 $(\rho^{(1)} \ldots \rho^{(i)}, q_{i+1}\sigma_\vartheta \rho^{(1)} \ldots \rho^{(i)} \wedge \ldots \wedge q_d\sigma_\vartheta \rho^{(1)} \ldots \rho^{(i)})$  
 
 At step $d-1$, the procedure constructs the leaf node $(\rho^{(1)}\ldots \rho^{(d)},\lambda)$, that is $(\sigma'_1,\lambda)$, as we wished to prove.\qed


\end{proof}
%
\begin{theorem}\label{teo:proc2corr}
	Let $\Sigma'$ be the set of substitutions returned by Procedure \textit{HOCQA}-$\shdlssx$($\psi_Q$, $\phi_\KB$). Then $\Sigma'$ is the HO answer set of $\psi_Q$ w.r.t. $\phi_\KB$.
\end{theorem}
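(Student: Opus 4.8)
The plan is to prove the two inclusions between the set $\Sigma'$ returned by Procedure \textit{HOCQA}-$\shdlssx$($\psi_Q$,$\phi_\KB$) and the HO answer set of $\psi_Q$ w.r.t.\ $\phi_\KB$, i.e.\ the collection of all substitutions $\sigma$ for which there is a $\flqsr$-interpretation $\M$ with $\M \models \phi_\KB \wedge \psi_Q\sigma$. Writing $\psi_Q = q_1 \wedge \ldots \wedge q_d$, Lemma~\ref{lemma:proc2} tells us that, for every open and complete branch $\vartheta$ of $\T_\KB$, one has $\sigma \in \Sigma'$ iff $\{q_1\sigma,\ldots,q_d\sigma\} \subseteq \vartheta$. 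Hence it suffices to show that, for an arbitrary substitution $\sigma$ (involving exactly the variables of $\psi_Q$), there is a model $\M$ of $\phi_\KB \wedge \psi_Q\sigma$ if and only if some open complete branch of $\T_\KB$ contains all the literals $q_1\sigma,\ldots,q_d\sigma$.

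I would first dispatch the ``if'' direction, which gives $\Sigma' \subseteq$ answer set. Assume $\{q_1\sigma,\ldots,q_d\sigma\} \subseteq \vartheta$ for some open complete branch $\vartheta$ of $\T_\KB$, and write $\vartheta = \eta\sigma_\eta$, where $\eta$ is the open fulfilled branch from which $\vartheta$ is obtained at step~24. Since $\T_\KB$ is then not closed, Theorem~\ref{teo:completeness} and the remark following its proof provide a $\flqsr$-interpretation $\M_\eta$ that satisfies $\phi_\KB$ and every formula occurring in $\eta$; because $M_\eta$ is defined through $\sigma_\eta$, it also satisfies every literal occurring in $\vartheta = \eta\sigma_\eta$. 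In particular $\M_\eta \models q_i\sigma$ for each $i$, hence $\M_\eta \models \phi_\KB \wedge \psi_Q\sigma$, so $\sigma$ lies in the HO answer set.

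For the ``only if'' direction, which gives answer set $\subseteq \Sigma'$, assume $\M \models \phi_\KB \wedge \psi_Q\sigma$. Then $\M$ also satisfies the expansion $\Phi_\KB$ of $\phi_\KB$: its quantifier-free conjuncts occur verbatim in $\Phi_\KB$, while each $Exp(S_i)$ is a conjunction of instances of a purely universal conjunct $S_i$ of $\phi_\KB$ and therefore holds in $\M$. Hence $\M \models \T_0 = \Phi_\KB$, and by Lemma~\ref{lemma:correctness}, applied along the sequence of tableaux $\T_0,\T_1,\ldots,\T_\KB$ produced by the procedure — whose only tableau-modifying steps are the E-rule at step~8, the PB-rule at step~10 and the equality substitutions at step~24 — we obtain $\M \models \T_\KB$. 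Thus $\M$ satisfies some branch $\vartheta$ of $\T_\KB$; since the procedure halts only once $\T_\KB$ is fulfilled and free of literals $x = y$ with distinct $x,y$, this branch $\vartheta$ is complete, and being satisfiable it is open. It then remains to prove that $\{q_1\sigma,\ldots,q_d\sigma\} \subseteq \vartheta$, after which Lemma~\ref{lemma:proc2} yields $\sigma \in \Sigma'$.

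I expect this last inclusion to be the crux of the argument. Each $q_i\sigma$ is a ground $\flqsr$-literal of type $x \in X^1$, $\langle x,y\rangle \in X^3$, $x=y$, or a negation thereof, over level-$0$ variables already occurring in $\phi_\KB$ and over level-$1$/level-$3$ variables naming the concepts, roles and data types of $\KB$. The key is a branch-saturation claim: for every such atom $A$, the expansion $\Phi_\KB$ contains — among the instances over the tuples of $\varz(\phi_\KB)$ of the constraints $\xi_1,\ldots,\xi_{12}$ and of $\theta(H)$ for $H \in \KB$ — a clause in which $A$ or $\overline A$ occurs as a disjunct, so that completeness of $\vartheta$ forces $A \in \vartheta$ or $\overline A \in \vartheta$; since $\M \models \vartheta$ and $\M \models q_i\sigma$, it must be $q_i\sigma \in \vartheta$. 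For atoms of type $x=y$ the required bookkeeping is carried out by $\sigma_\vartheta$ (cf.\ Lemma~\ref{lemma:invariant}), which identifies exactly the variables forced equal on $\vartheta$. Turning this ``every query atom is decided on $\vartheta$'' claim into a rigorous case analysis — in particular making explicit the standing assumption that every concept, role, individual and data-type name occurring in $Q$ already occurs in $\KB$, so that the associated $\flqsr$-variables are indeed constrained by $\phi_\KB$ — is where the real work lies; everything else reduces to Lemmas~\ref{lemma:invariant}, \ref{lemma:correctness} and \ref{lemma:proc2} and to Theorems~\ref{teo:correctness} and \ref{teo:completeness}.
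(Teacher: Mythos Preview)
Your ``if'' direction ($\Sigma' \subseteq$ answer set) matches the paper's argument essentially verbatim: Lemma~\ref{lemma:proc2} places $\{q_i\sigma\}$ on some open complete branch, and the canonical model $\M_\vartheta$ of Theorem~\ref{teo:completeness} does the rest.

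For the ``only if'' direction you take a different route from the paper, and your route has a genuine gap. The branch-saturation claim you rely on --- that whenever $A$ or $\overline A$ occurs as a disjunct of some clause in $\Phi_\KB$, completeness of $\vartheta$ forces $A \in \vartheta$ or $\overline A \in \vartheta$ --- is simply false. Fulfilment of a branch means that \emph{some} disjunct of each clause lies on the branch, not that every disjunct is decided. Concretely, the instance $\neg(x_b \in X^1_C) \vee x_b \in X^1_\I$ of $\xi_3$ is fulfilled as soon as $x_b \in X^1_\I$ (which $\xi_{10}$ puts on every branch) appears, and the procedure then never applies the PB-rule to $x_b \in X^1_C$ from this clause; nothing else need decide that atom. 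Hence from $\M \models \vartheta$ and $\M \models q_i\sigma$ you cannot infer $q_i\sigma \in \vartheta$, and the bridge to Lemma~\ref{lemma:proc2} collapses.

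The paper proceeds differently for this direction. It argues contrapositively through the \emph{canonical} models rather than through the model $\M$ you start with: assuming $\sigma' \notin \Sigma'$, Lemma~\ref{lemma:proc2} gives for every open complete branch $\vartheta'$ some $i$ with $q_i\sigma' \notin \vartheta'$; by the very construction of $\M_\vartheta$ (a positive membership atom holds in $\M_\vartheta$ exactly when it occurs on the branch), this yields $\M_\vartheta \not\models q_i\sigma'$, so no $\M_\vartheta$ satisfies $\psi_Q\sigma'$. The contradiction is then obtained by appealing to the fact that satisfiability of $\phi_\KB \wedge \psi_Q\sigma'$ can already be witnessed by one of the canonical $\M_\vartheta$. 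That last step is asserted rather than proved in the paper, so the paper's argument is itself somewhat elliptical here --- but the point is that it does \emph{not} depend on the saturation property you invoke, and exploiting the explicit definition of $\M_\vartheta$ is the mechanism you are missing.
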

\begin{proof}
To prove the theorem we show that the following two assertions hold.
\begin{enumerate}
\item If $\sigma' \in \Sigma'$, then $\sigma'$ is an element of the HO answer set of $\psi_Q$ w.r.t. $\phi_\KB$.
\item If $\sigma'$ is a substitution of the HO answer set of $\psi_Q$ w.r.t. $\phi_\KB$, then $\sigma' \in \Sigma'$.
\end{enumerate}
We prove assertion (1) as follows. Let $\sigma' \in \Sigma'$ and $\vartheta'=\vartheta\sigma_\vartheta$ an open and complete branch of $\T_\KB$ such that $\DT_{\vartheta'}$ contains a branch $\eta$ of $d+1$ nodes whose leaf is labelled $\langle \sigma_1', \lambda \rangle$, where $\sigma_1'$ is a substitution such that $\sigma' = \sigma_{\vartheta}\sigma_1'$. By Lemma \ref{lemma:proc2}, $\{  q_1\sigma', \ldots, q_d\sigma' \} \subseteq \vartheta'$. Let $\M_\vartheta$ be a $\flqsr$-interpretation constructed as shown in Theorem \ref{teo:completeness}. We have that $\M_\vartheta \models q_{i}\sigma'$, for $i = 1,\ldots,d$ because $\{q_1\sigma', \ldots, q_d\sigma' \} \subseteq \vartheta'$ holds. Thus $\M_\vartheta \models \psi_Q\sigma'$, and since $\M_\vartheta \models \phi_\KB$,  $\M_\vartheta \models \phi_\KB \wedge \psi_Q\sigma'$ holds.  Hence $\sigma'$ is a substitution of the answer set of $\psi_Q$ w.r.t. $\phi_\KB$. To show that assertion (2) holds, let us consider a substitution $\sigma'$ belonging to the answer set of $\psi_Q$ w.r.t. $\phi_\KB$. Then there exists a $\flqsr$-interpretation $\M \models \phi_\KB \wedge \psi_Q\sigma'$.
 Assume by contradiction that $\sigma' \notin \Sigma'$. Then, by Lemma \ref{lemma:proc2}  $\{q_1\sigma, \ldots, q_d\sigma' \} \not\subseteq\vartheta'$, for every open and complete branch $\vartheta'$ of $\T_\KB$. In particular, given any open complete branch $\vartheta'$ of $\T_\KB$, there is an $i \in \{1,\ldots,d\}$ such that $q_i\sigma' \notin \vartheta' = \vartheta\sigma_{\vartheta}$ and thus $\M_\vartheta \not\models q_i\sigma'$.

 By the generality of $\vartheta'=\vartheta\sigma_\vartheta$, it holds that every $\M_\vartheta$ satisfying $\T_\KB$, and thus $\phi_\KB$, does not satisfy $\psi_Q\sigma'$. Since  we can prove that $\M\models \phi_\KB \wedge \psi_Q\sigma'$, for some $\flqsr$-interpretation $\M$, by restricting our interest to the interpretations $\M_\vartheta$ of $\phi_\KB$ defined in the proof of Theorem \ref{teo:completeness}, it turns out that $\sigma'$ is not a substitution belonging to the answer set of $\psi_Q$ w.r.t. $\phi_\KB$, and this leads to a contradiction. Thus we have to admit that assertion (2) holds. Finally, since assertions (1) and (2) hold, $\Sigma'$ and the answer set of $\psi_Q$ w.r.t. $\phi_\KB$ coincide and the thesis holds.\qed

\end{proof}


Termination of Procedure \textit{HOCQA}-$\shdlssx$ is based on the fact that the while-loops 5--13 and 14--44 terminate. 

Concerning termination of the while-loop 5--13, our proof is based on the following two facts. The E-Rule and PB-Rule are applied only to non-fulfilled formulae on open branches and tend to reduce the number of non-fulfilled formulae occurring on the considered branch. In particular, when the E-Rule is applied on a branch $\vartheta$, the number of non-fulfilled formulae on $\vartheta$ decreases. In case of application of the PB-Rule on a formula $\beta = \beta_1 \vee \ldots \vee \beta_n$ on a branch, the rule generates two branches. In one of them the number of non-fulfilled formulae decreases (because $\beta$ becomes fulfilled).  In  the other one the number of non-fulfilled formulae stays constant but the subset $B^{\overline{\beta}}$ of $\{\overline{\beta}_1,\ldots,\overline{\beta}_n\}$ occurring on the branch gains a new element. Once $|B^{\overline{\beta}}|$ gets equal to $n-1$, namely after at most $n-1$ applications of the PB-rule, the E-rule is applied and the formula $\beta = \beta_1 \vee \ldots \vee \beta_n$ becomes fulfilled, thus decrementing the number of non-fulfilled formulae on the branch. Since the number of non-fulfilled formulae on each open branch gets equal to zero after a finite number of steps and the E-rule and PB-rule can be applied only to non-fulfilled formulae on open branches,  the while-loop 5--13 terminates.
Concerning the while-loop 14--44, its termination can be proved by observing that the number of branches of the \ke\space resulting from the execution of the previous while-loop 5--13 is finite and then showing that the internal while-loops 18--23 and 28--42 always terminate. 
Indeed, initially the set $\mathsf{Eq}_{\vartheta}$ contains a finite number of literals of type $x=y$, and $\sigma_\vartheta$ is the empty substitution. It is then enough to show that the number of literals of type $x=y$ in $\mathsf{Eq}_{\vartheta}$, with distinct $x$ and $y$, strictly decreases during the execution of the internal while-loop 18--23. But this follows immediately, since at each of its iterations one put $\sigma_{\vartheta} \defAs \sigma_{\vartheta} \cdot \{x/z, y/z\}$, with $z \defAs \min_{<_{\vartheta}}(x,y)$, according to a fixed total order $<_{\vartheta}$ over the variables of $\mathsf{Var}_{0}(\vartheta)$ and then  the application of $\sigma_\vartheta$ to $\mathsf{Eq}_{\vartheta}$ replaces a literal of type $x=y$ in $\mathsf{Eq}_{\vartheta}$, with distinct $x$ and $y$, with a literal of type $x=x$.

The while loop 28--42 terminates  when the stack $\mathcal{S}$ of the nodes of the decision tree gets empty. Since the query $\psi_Q$ contains a finite number of conjuncts and the number of literals on each open and complete branch of $\T_{\KB}$ is finite, the number of possible matches (namely the size of the set $Lit^M_Q$) computed at step $(C )$ is finite as well. Thus, in particular, the internal while loop 34--38 terminates at each execution. Once the procedure has processed the last conjunct of the query, the set $Lit^M_Q$ of possible matches is empty and thus no element gets pushed in the stack $\mathcal{S}$ anymore. Since the first instruction of the while-loop at step $(i)$ removes an element from $\mathcal{S}$, the stack gets empty after a finite number of ``pops''. Hence Procedure \textit{HOCQA}-$\shdlssx$ terminates, as we wished to prove.  

%


Next, we provide some complexity results.
Let $r$ be the maximum number of universal quantifiers in each $S_i$ ($i=1,\ldots,m$), and put $k \defAs |\varz(\phi_{\KB})|$. Then, each $S_i$ generates at most $k ^r$ expansions. Since the knowledge base contains $m$ such formulae, the number of disjunctions in the initial branch of the \ke\space is bounded by $m \cdot k^r$. Next, let $\ell$ be the maximum number of literals in each $S_i$. Then, the height of the \ke (which corresponds to the maximum size of the models of $\Phi_{\KB}$ constructed as illustrated above) is $\mathcal{O}(  \ell  m  k^r)$ and the number of leaves of the tableau, namely the number of such models of $\Phi_{\KB}$, is $\mathcal{O}(2^{\ell  m  k^r})$. Notice that the construction of $\mathsf{Eq}_{\vartheta}$ and of $\sigma_{\vartheta}$ in the lines 16--23 of Procedure \textit{HOCQA}-$\shdlssx$ takes $\mathcal{O}(  \ell  m  k^r)$ time, for each branch $\vartheta$.

Let $\eta(\T_{\KB})$ and $\lambda(\T_{\KB})$ be, respectively, the height of $\T_{\KB}$ and the number of leaves of $\T_{\KB}$ computed by Procedure \textit{HOCQA}-$\shdlssx$.  Plainly, $\eta(\T_{\KB}) = \mathcal{O}(\ell  m  k^r)$  and $\lambda(\T_{\KB})= \mathcal{O}(2^{\ell  m  k^r})$, as computed above. It is easy to verify that $s=\mathcal{O}(\ell   k^r)$ is the maximum branching of $\DT_\vartheta$. Since the height of $\DT_\vartheta$ is $h$, where $h$ is the number of literals in $\psi_Q$, and the successors of a node are computed in $\mathcal{O}(\ell   k^r)$ time, the number of leaves in $\DT_\vartheta$ is $\mathcal{O}(s^{h})=\mathcal{O}((\ell  k^r)^{h})$ and they are computed  in $\mathcal{O}( s^{h} \cdot  \ell k^r \cdot h) = \mathcal{O}(h \cdot (\ell  k^r)^{(h+1)})$ time. Finally, since we have $\lambda(\T_{\KB})$ of such decision trees, the answer set of $\psi_{Q}$ w.r.t.\ $\phi_\KB$ is computed in time
$\mathcal{O}(h \cdot (\ell  k^r)^{(h+1)} \cdot\lambda(\T_{\KB})) =\mathcal{O}( h \cdot (\ell   k^r)^{(h+1)} \cdot 2^{\ell  m  k^r})$.

 Since the size of $\phi_\KB$ and of $\psi_{Q}$ are  related to those of $\KB$ and of $Q$, respectively (see the proof of Theorem \ref{CQADL} in \cite{RR2017ext} for details on the reduction), the construction of the HO answer set of $Q$ with respect to $\KB$ can be done in double-exponential time. In case $\KB$ contains neither role chain axioms nor qualified cardinality restrictions, the complexity of our \textit{HOCQA} problem is in EXPTIME, since the maximum number of universal quantifiers in $\phi_{\KB}$, namely $r$, is a constant (in particular $r = 3$). The latter complexity result is a clue of the fact that the \textit{HOCQA} problem is intrinsically more difficult than the consistency problem (proved to be NP-complete in \cite{CanLonNicSanRR2015}). This is motivated by the fact that the consistency problem simply requires to guess a model of the knowledge base while the \textit{HOCQA} problem forces the construction of all the models of the knowledge base and to compute a decision tree for each of them.  

We remark that such result compares favourably to the complexity of the usual CQA problem for a wide collection of description logics such as the Horn fragment of $\mathcal{SHOIQ}$ and of $\mathcal{SROIQ}$, respectively, EXPTIME- and 2EXPTIME-complete in combined complexity
(see \cite{Ortiz:2011:QAH:2283516.2283571} for details).

\section{Conclusions and future work}
In this paper we have considered an extension of the CQA problem for the description logic $\shdlssx$ to more general queries on roles and concepts. The resulting problem, called HOCQA, can be instantiated to the most widespread ABox reasoning services such as instance retrieval, role filler retrieval, and instance checking. We have proved the decidability of the HOCQA problem   by reducing it to the satisfiability problem for the set-theoretic fragment $\flqsr$.

We have introduced an algorithm  to compute the HO answer set of a $\flqsr$-formula $\psi_{Q}$ representing a HO $\shdlssx$-conjunctive query $Q$ w.r.t. a $\flqsr$-formula $\phi_{\KB}$ representing a $\shdlssx$ knowledge base. The procedure, called \textit{HOCQA}-$\shdlssx$, is based on the \ke\space system and on  decision trees. It takes as input $\psi_{Q}$ and $\phi_{\KB}$, and yields a \ke\space $\T_{\KB}$ representing the saturation of $\phi_{\KB}$ and the requested HO answer set $\Sigma'$. Procedure \textit{HOCQA}-$\shdlssx$ is proved correct and complete, and some complexity results are provided. Such procedure extends the one introduced in  \cite{ictcs16}
as it allows one to handle HO $\shdlssx$-conjunctive queries. 


We are currently working at the implementation of Procedure \textit{HOCQA}-$\shdlssx$. We plan to increase the efficiency of the expansion rules and 
to extend reasoning with data types. Lastly, we intend to provide a parallel model of the  procedure that we are implementing.

We also plan to increase the expressive power of the set theoretic fragments we are working with. In particular, we intend to define a decidable $n$-level stratified syllogistic allowing to represent an extension of $\shdlssx$ admitting data type groups. 

We also intend to extend the set-theoretic fragment presented in \cite{CanNic2013} with the construct of generalized union and with a restricted form of binary relational composition operator. The latter operator, in particular, turns out to be useful for the set-theoretic representation of various logics. The \ke\space based procedure will be adapted to the new set-theoretic fragments by also making use of the techniques introduced in \cite{CanNicOrl11} and in \cite{CanNicOrl10} in the area of relational dual tableaux. 
On the other hand we think that \ke x could be used in the ambit of relational dual tableaux to improve the performances of relational dual tableau-based decision procedures.

\bibliographystyle{plain}
\bibliography{biblioext} 
\clearpage

\end{document}